\newtheorem{claim}{Claim}
\newtheorem{lemma}{Lemma}
\newtheorem{definition}{Definition}
\numberwithin{equation}{section}
\newenvironment{proof}[1][Proof]{\noindent\textbf{#1.} }{\ \rule{0.5em}{0.5em}}
\renewcommand{\epsilon}{\varepsilon}
\newcommand{\kb}[2]{\ket{#1} \bra{#2}}
\def\N{\mathcal{N}}
\def\Enc{\mathrm{Enc}}
\def\Dec{\mathrm{Dec}}
\def\Ext{\mathrm{Ext}}
\def\Hmin{\mathrm{H}_\mathrm{min}}
\def\Hmax{\mathrm{H}_\mathrm{max}}
\def\R{\mathcal{R}}
\def\TT{\mathcal{T}}
\def\sf{F}					
\def\sfA{F_A}
\def\sfB{F_B}
\def\fail{\mathrm{failure}}
\def\suc{\mathrm{success}}
\def\erase{\mathrm{erase}}
\def\ZO{\{0,1\}}
\let\@copyrightspace\relax
\begin{document}

\title{Erasable Bit Commitment from Temporary Quantum Trust}

\author{
Norbert L\"{u}tkenhaus \thanks{Institute for Quantum Computing and Department of Physics and Astronomy, University of Waterloo, and Perimeter Institute for Theoretical Physics. Email: lutkenhaus.office@uwaterloo.ca}
\and
Ashutosh Marwah \thanks{Institute for Quantum Computing and Department of Physics and Astronomy, University of Waterloo. Email: ashutosh.marwah@outlook.com} 
\and 
Dave Touchette \thanks{Institute for Quantum Computing and Department of Combinatorics and Optimization,
University of Waterloo, Perimeter Institute for Theoretical Physics, Departement d'informatique and Institut Quantique, Universite de Sherbrooke. Email: touchette.dave@gmail.com
} 
}

\maketitle




\begin{abstract}

We introduce a new setting for two-party cryptography by introducing the notion of temporarily trusted third parties. These third parties act honest-but-curious during the execution of the protocol. Once the protocol concludes and the trust period expires, these third parties may collaborate with an adversarial party. We implement a variant of the cryptographic primitive of bit commitment in this setting, which we call erasable bit commitment. In this primitive, the sender has the choice of either opening or erasing her commitment after the commit phase. For example, she can ask for an erase before the trust period expires in case the conditions for opening the commitment have not been met. The erasure prevents a future coalition of the trusted party and the receiver from extracting any information about the commitment. However, this option also weakens the cryptographic primitive relative to standard bit commitment. Furthermore, the committed information is not revealed to the trusted node at any stage during the protocol. Our protocol requires a constant number of third parties and can tolerate a small number of corrupt third parties as well as implementation errors.

\end{abstract}

\begin{sloppypar}
\section{Introduction}

\subsection{Background}

Bit commitment is a cryptographic protocol between two parties, Alice and Bob, who do not trust each other. The protocol allows Alice to commit to a classical bit in such a way that it is hidden from Bob. Bit commitment is a useful primitive to implement cryptographic tasks such as coin flipping \cite{Blum83} and input swapping \cite{DNS10}. It has also been shown to be equivalent to oblivious transfer \cite{Crepeau94, BF10} and secure multi-party computation \cite{Kilian88}. It is well known that unconditional bit commitment between two parties cannot be implemented classically or even by utilizing quantum resources~\cite{Mayers97, Lo97, Brassard97, DAriano07, Winkler11}. Several different relaxations on the security requirements of bit commitment have been studied in the past. In the classical setting, the most notable of these are settings which require bit commitment to either be statistically binding  and computationally hiding (see, e.g., \cite{Naor91}), or computationally binding and statistically hiding (see, e.g., \cite{Chaum86}). On the other hand, in the quantum setting unconditionally secure bit commitment can be implemented if one assumes that the adversaries have small or noisy quantum memories (see, e.g., ~\cite{DFSS08, DFRSS07, Wehner08, KWW12} and reference therein).\\

Another possible way of implementing bit commitment is to use a trusted third party, who is honest during the protocol. 
(See~\cite{Riv99} for a variant where the trusted party only helps to initialize the protocol.) To commit to a bit $b$, Alice and Bob first share a secret key $k$. Then, Alice sends $(b \oplus k)$ to the third party (through a private channel), who stores the bit until the open phase. During the open phase, Alice simply asks the trusted party to reveal the bit $(b \oplus k)$ to Bob, who is able to extract $b$ from it. Since, the third party is honest, he does not reveal the bit to Bob (Hiding for Bob) and he also does not change the value of the bit after the commit phase (Binding for Alice). Furthermore, the protocol is hiding even for the third party, as he cannot determine $b$ unless someone tells him $k$.\\

Suppose, however, that Alice decides to abandon the commitment instead of opening it. If it is the case that Alice can trust the third party \emph{forever}, Alice's commitment remains a secret from Bob for all future times. However, it is imprudent to base one's trust assumptions on how a party might behave in the future. It is plausible that at some point in the future the third party becomes an adversary to Alice and colludes with Bob after the conclusion of the protocol. If Bob and the third party get together, then they can figure out the value of the bit Alice was committed to.

\subsection{Simple protocol for erasable bit commitment from temporary quantum trust}
\label{subsec:SimpProt}

In fact, due to the no-go theorem\footnote{The reader should note that throughout this paper, the form of security considered is unconditional (information-theoretic) unless otherwise stated.} for classical bit commitment no classical protocol which is binding for Alice can satisfy a hiding property for both Bob and the trusted party together once the protocol is over. On the other hand, one can use quantum mechanics to create a protocol, which only requires Alice and Bob to temporarily trust the third party during the course of the protocol and still to protect the commited information against the third party breaching the trust after the trust period. The basic idea of this protocol, implementing a primitive which we term \emph{erasable bit commitment} (EBC), can be explained by modifying the aforementioned classical protocol. \\

First, Alice and Bob share secret bits $ \theta$ and $k$. In order to commit to bit $b$, Alice sends the state $H^{\theta} \ket{b \oplus k}$ (where $H$ is the Hadamard gate) to the third party in order to commit to $b$. To open the commitment, Alice asks the third party to forward the state to Bob, who can invert the Hadamard depending on the value of the secret bit $\theta$, and open $b$. Furthermore, in case Alice decides to abandon the commitment, she can ask for an "erasure", that is, ask the third party to send her the state back. In this case, once the protocol is over, even if the trusted party and Bob get together, they will not be able to figure out Alice's commitment.  The secret bits Bob shared with Alice were not correlated with Alice's commitment. Whereas, since the third party honestly returned the quantum state back to Alice, he too would not have any information about Alice's commitment. Unlike the classical setting, this primitive is possible in the quantum setting due to the no-cloning theorem. \\

Note that if the trust time is limited, the possibility for Alice to ask for an erasure at the end of the trust period is necessary.
In fact, it could happen that the condition for Alice to issue the open command is not met during the trust period. In such a case, she needs to be able to abort given that the end of the trust period is looming! The erasure command allows her to abort while still ensuring that her data is not revealed to Bob or the third party in the future, even if she can no longer trust the third party anymore. Additionally, the erasure command is announced publicly so that Bob knows that Alice has abandoned her commitment and may not be committed to a classical value anymore. In the rest of the paper, we formalise our security definitions and develop an EBC protocol with multiple trusted third parties, which can tolerate channel noise as well as a small fraction of corrupt third parties.  \\

One might wonder how practical the temporarily trusted party setting is. We envision a scenario where trust is viewed as a resource provided by various independent entities like sovereign states and private companies. For example, if two countries wish to use bit commitment (say in order to exchange information using input swapping), they could consider using a group of other countries as temporarily trusted third parties. The countries acting as third parties will be deterred from acting adversarially because if it is later revealed that these states acted dishonestly, it would be equivalent to breach of sovereign trust and could lead to a weakening of the country's and the government's standing. Similarly, as another example, if two companies in Canada wish to perform bit commitment using temporarily trusted third parties, they could use provincial and municipal governments as the trusted third parties. They could also consider using other independent and competing companies as the trusted third parties. Libracoin \cite{Libracoin}, which is a digital currency initiated by Facebook, uses a similar idea for verifying transactions, wherein it entrusts multiple independent companies to maintain a distributed ledger. In such a situation, the benefit of \emph{temporary} trust becomes evident immediately. Suppose two companies use the companies $\TT_1, \TT_2, \cdots, \TT_m$ as trusted third parties during a commitment protocol, and sometime after the protocol is completed, a single company $C$ acquires a large fraction of the companies $\TT_1, \TT_2, \cdots, \TT_m$. What happens to the security of the commitment scheme in this case? Now, company $C$ will have access to a large fraction of the information available to the companies $\TT_1, \TT_2, \cdots, \TT_m$. It is important that we are able to secure the commitment information in such a scenario as well. For this reason, we consider the temporarily trusted setting. It should be noted that the selection of the temporarily trusted parties completely depends on the past and the present. It is independent of what might happen in the far future. Once a protocol has been securely completed in this setting, its security is maintained irrespective of what happens in the future. In this sense, this setting is "future-proof", like QKD. \\

This paper is organised as follows. We begin by introducing the temporarily trusted party setting in the next section. Then, we describe the erasable bit commitment primitive and the security requirements for such a primitive. Following this we present a robust protocol for this primitive based on BB84 states and prove its security informally. In Section~\ref{sec:AddnSecProp}, we state our claims concerning additional security guarantees. In Section \ref{sec:OTImpossible}, we show that it is impossible to implement a composable protocol for oblivious transfer in this setting. The conclusion discusses how our protocol avoids the no-go results for quantum bit commitment and also compare our protocol to other variants on quantum bit commitment. In the Appendix, we describe our channel noise model, prove the impossibility of classical protocols for erasable bit commitment in a trusted party setting and then formally state definitions and provide security proofs.

\section{Temporarily trusted party setting}

In this section, we describe the temporarily trusted party setting, which we use to implement erasable bit commitment. As stated earlier, the idea is to try to use trusted third parties to implement protocols which cannot be implemented in a two party setting. Further, we do not wish to trust the parties for eternity, we would like to trust them only temporarily for the duration of the protocol. In this section, we formally define what we mean when we use the term \emph{trust}. We also wish to allow for a certain number of dishonest or corrupted trusted parties. We state the trust guarantee in the presence of such parties as well. 

\subsection{Honest-but-curious parties}
\label{subsec:HbCDefn}

We would like our trusted parties to behave in a manner which is indistinguishable from an honest party to an external observer. We call such behaviour \emph{honest-but-curious} behaviour. In particular, the input-output behaviour of honest-but-curious parties is consistent with that of honest participants during the protocol. However, these parties may try to gather as much information as possible during the protocol. We base our definition of $\delta-$honest-but-curious parties on the definition of specious adversaries given in~\cite{DNS10}. The definition considered by Ref. ~\cite{DNS10} is suited for the case of two parties where at most one of those can be honest-but-curious. We generalize this definition to a setting with $p$ number of parties. The parameter $\delta$ is used to quantify the notion of approximately honest-but-curious behaviour. However, we show below that it is sufficient to consider $\delta=0$, or perfectly honest-but-curious behaviour for our protocol. \\

Similar to the previous definition, we define a party to be honest-but-curious if at each step of the protocol it can prove to an external auditor that it has been following the protocol honestly. Specifically, at any stage during the protocol the honest-but-curious party should be able to apply a local map to its state to make its behaviour indistinguishable from that of an honest party. The \emph{local behaviour} part of the definition below captures this requirement. Further, in the \emph{communication behaviour} part of the definition, we also require that the party announce any deviations from the protocol publicly. Note that throughout this paper, we use the notation $\rho \approx_\epsilon \sigma$ for states $\rho$ and $\sigma$ to denote $1/2 \Vert \rho- \sigma \Vert_1 \leq \epsilon$. We also use the notation $[n]$ for the set $\{1,2 \cdots, n\}$.

\begin{definition}[$\delta$-honest-but-curious party] Consider a $k$-step protocol between the parties $(P_n)_{n=1}^p$. Let $\rho_{P_1 \ldots P_p}$ be an initial state distributed between these parties and $\rho_{P'_1 \ldots P'_p R}$ an extension of this state to some reference system $R$. For every strategy of these parties, there is a corresponding quantum channel which is applied by these parties to the initial state. We use this channel to denote the strategy of the parties. In the following, we say that the parties follow a strategy $\Phi^{(j)}_{P'_1 \ldots P'_p}$ if at step $j$ of the protocol the joint state of all the parties is $\Phi^{(j)}_{P'_1 \ldots P'_p} (\rho)$. Further, the honest strategy of party $P_n$ is indicated using $P_n$ in the subscript of $\Phi$ and a general (possibly adversarial) strategy by $P'_n$. \\

\noindent For a $k$-step protocol between the parties $(P_n)_{n=1}^p$, the party $P_i$ is said to be $\delta-$honest-but-curious if 
	\begin{enumerate}
		\item \emph{Local behaviour:} For every strategy of the other parties $P_{\bar{i}} := (P_n)_{n\neq i}$, every initial state $\rho_{P_{i}P_{\bar{i}} R}$ and for every step in the protocol $j \in [k]$, there exists a local CPTP map $\mathcal{T}_j$ such that 
		\begin{align}
			((\mathcal{T}_j)_{P^\prime_{i}} \otimes I_{P^\prime_{\bar{i}}}) \  \Phi^{(j)}_{P^\prime_{i}P^\prime_{\bar{i}}} (\rho_{P_{i}P_{\bar{i}} R} )   \approx_\delta \Phi^{(j)}_{P_{i}P^\prime_{\bar{i}}} (\rho_{P_{i}P_{\bar{i}}R})
		\end{align}
		where $\Phi^{(j)}_{P_{i}P^\prime_{\bar{i}}}$ ($\Phi^{(j)}_{P^\prime_{i}P^\prime_{\bar{i}}}$) is the map applied on the initial state after step $j$ if party $P_i$ follows the protocol honestly (dishonestly) and the other parties follow a general strategy denoted by $P^\prime_{\bar{i}}$ in the channel subscript.
		\item \emph{Communication behaviour:} The party $P_i$ publicly announces if it receives any input or message from any party which deviates from the protocol, in particular if party $P_i$ expects to receive an $n$-dimensional quantum state and receives something lying outside of that space, it publicly announces the deviation.
	\end{enumerate}
\end{definition}

It should be noted at this point that the second requirement above precludes the possibility of the honest-but-curious nodes communicating and collaborating during a protocol. This is different from the definition of classical specious adversaries considered by Ref. \cite{DNS10}, who allow the specious adversaries to collaborate freely. In the scenario considered in Ref. \cite{DNS10}, this was reasonable since they only considered specious adversaries. In our setting, however, we consider both honest-but-curious adversaries and dishonest adversaries. We cannot expect the participants to know beforehand which parties are honest-but-curious and which ones are dishonest. Therefore, we further require that the honest-but-curious nodes do not interact with other parties unless they are required to do so according to the protocol. This assumption, though, can be lifted in our protocol, as we will show in our additional security claim of 'Expungement on successful runs' (Sec. \ref{sec:forget-main}). \\

In Appendix \ref{subsubsec:IntRes}, we show that if a $\delta-$honest-but-curious party is given a quantum state at some point during a protocol, and is asked to return it at a later point in the protocol, we can be certain that the state returned by him was \emph{almost} the same as the one given to him earlier up to tensoring of a fixed state. Specifically, suppose that the honest-but-curious party $T$ is given a state $\rho_{TR}$ during the protocol, and is asked to send his share to party $P$ at a later point. Then, the state $\rho^\prime_{RPT^\prime}$ (where $T^\prime$ is a memory held by $T$) sent by $T$ satisfies
\begin{align*}
	\rho^\prime_{PRT'} \approx_{O(\sqrt{\delta})} (I_R \otimes I_{T \rightarrow P})\rho_{RT} \otimes \tau_{T'}.
\end{align*}
for some fixed state $\tau$. (See Lemma \ref{lemm:HbCDecouple} for formal statement). We see that in this scenario the honest-but-curious party is forced to be almost honest in this scenario. In the main body of this paper, we will therefore assume that the honest-but-curious nodes act completely honestly during the protocol for simplicity, since we will only be dealing with the action of honest-but-curious parties in scenarios like this. Formally, this corresponds to assuming $\delta =0$ or assuming the parties to be perfectly honest-but-curious. This does not lead to loss of generality and the security arguments remain almost the same. A more careful analysis of the behaviour of such parties during our protocol is provided in the Appendix \ref{subsec:SecProofs}. Lastly, since the memory of the honest-but-curious parties after giving away the state $\rho_{RT}$ is almost independent from that state, we say that the honest-but-curious party \emph{forgets} the state he received.\\ 

\subsection{Temporarily trusted party setting}

In addition to the usual two parties, Alice and Bob, involved in a bit commitment protocol, there are also additional trusted third parties $\TT_1, \ldots, \TT_m$ involved in the temporarily trusted party settings we consider. For the purpose of this paper, we refer to all the additional trusted third parties $\TT_1, \ldots, \TT_m$ as trusted parties or trusted nodes. Alice and Bob both trust that out of these $m$ trusted parties at least $(m - t)$ will act honest-but-curiously for the duration of the protocol. In the context of the definition of honest-but-curious above, in our setting the parties are $(A, B, \TT_1, \cdots, \TT_{m})$, and there exists a subset $E \subset [m]$ such that $|E| \leq t$ and for all $j \not\in E$ the party $\TT_j$ is honest-but-curious. Further, the EBC protocol is a two step protocol as will be seen later. The rest of the third parties (at most $t$ in number) can behave dishonestly during the protocol and even collaborate with the cheating party. We use the term 'adversaries' during the protocol to denote the cheating party (dishonest Alice or dishonest Bob) along with the dishonest trusted nodes.\\

Further, this trust assumption is \emph{temporary} or time-limited: after the end of the regular duration of the protocol, Alice and Bob do not have any guarantee about the behaviour of the third parties. They can no longer assume that the $\TT_i$'s behave honestly. These parties could act adversarially after the end of the protocol and even collaborate with a dishonest Alice or a dishonest Bob. 

\section{Erasable bit commitment in a setting with temporarily trusted parties}

Erasable bit commitment (EBC) is a protocol between two parties, Alice and Bob, in a setting with $m$ additional trusted parties, at most $t$ of which are dishonest and adversarial during the protocol. The protocol has two phases: a commit phase, followed by one of either an open or an erase phase. During the commit phase, Alice inputs a string $s$ and Bob receives a message notifying him that Alice has completed the commit phase. At the end of this stage, she is committed to this string, that is, if the commitment is 'opened' then Bob will receive $s$ or reject the commitment. Moreover, the commitment is hidden from Alice's adversaries (Bob and dishonest nodes collaborating with him) at this stage. After the commit phase, Alice can choose to open or erase the commitment. In case of an open, Bob receives the string $s$, the string Alice was committed to. Further, in this case, even if all the trusted parties get together after the protocol they cannot determine $s$. In case Alice chooses to erase, then once the protocol is over, even if Bob and all the trusted parties collaborate they cannot ascertain the value of $s$.\\

In the rest of the paper, we consider a randomized version of erasable bit commitment. Randomized EBC is essentially the same as above except now Alice does not have any input during the commit stage. Instead of choosing the string for her commitment as above, Alice receives a random string $c$ at the end of the commit phase, which is called as her commitment. Our security definitions and proofs are for this randomized version, and we leave direct discussion of non-randomized version for future work. See, e.g., Ref. \cite{KWW12} for a discussion on how to transform a randomized BC into a non-randomized one in their setting by  asking Alice to send $s \oplus c$ to Bob after the commit phase. \\

Thus, in the randomized EBC protocol (referred to as EBC here on) Alice and Bob do not have any inputs during the commit phase. Alice inputs an "open" or "erase" bit at some point after the commit phase into the protocol. In addition to the commitment $c$ mentioned above, Alice and Bob also receive flags $\sfA$ and $\sfB$ (let $\sf=\sfA=\sfB$) notifying them if Alice called for an erase ($\sf= \erase$), or if the open was successful ($\sf = \suc$) or if the open failed ($\sf = \fail$). In general if Alice is dishonest, Bob may not receive the same string as Alice. So, we call Bob's output $\hat{c}$. To summarize, the outputs of the protocol for Alice and Bob are as follows:

\begin{itemize}

\item Alice outputs a string $c \in \{ 0, 1 \}^\ell$ at the end of the commit phase, and a flag $\sfA \in \{ \suc, \fail, \erase \}$ at the end of either the open or erase phase;

\item Bob outputs a string $\hat{c} \in \{ 0, 1 \}^\ell$  as well as a flag $\sfB \in \{ \suc, \fail, \erase \}$ at the end of either the open or erase phase. 

\end{itemize}

\subsection{Security requirements for erasable bit commitment:}
\label{subsec:SecReq}

Given $m$ trusted nodes $(\TT_i)_{i=1}^m$ of which at least $(m- t)$ act honest-but-curiously for the duration of the protocol, an erasable bit commitment protocol satisfies the following security requirements:
\begin{enumerate}
\item Correctness: If both Alice and Bob are honest and Alice receives $c$ during the commit phase, then, in the case of open, Bob accepts the commitment ($\sf = \suc$) and $\hat{c} = c$.

\item Binding: If Bob is honest, then, after the commit phase, there exists a classical variable $\tilde{C}$ such that if Alice chooses to open the commitment then either Bob accepts the commitment ($\sfB = \suc$) and receives an output $\hat{C}$ such that $\hat{C} = \tilde{C}$, or Bob rejects the commitment ($\sfB = \fail$). (This guarantee only makes sense in the case that the open protocol is run after the commit protocol; if an erase protocol is run instead, then Alice does not have to be committed to a classical value anymore.)

\item Hiding: 
\begin{enumerate} \item \emph{Commit:} If Alice is honest and she receives the output $c$ during the commit phase, then, after the commit phase and before the open or erase phase, Bob and the dishonest trusted nodes together can only extract negligible information about $c$, i.e., their joint state is almost independent of Alice's commitment. 

\item \emph{Open:} If Alice and Bob are honest and Alice's commitment is $c$, then, after an open phase, the $\TT_i$'s together can only extract negligible information about $c$, i.e., their joint state is almost independent of Alice's commitment.

\item \emph{Erase:} If Alice is honest and she receives the output $c$ during the commit phase, then, after the erase phase, Bob and the $\TT_i$'s together can only extract negligible information about $c$, i.e., their joint state is almost independent from Alice's commitment. 

\end{enumerate}
\end{enumerate}

We give more formal security definitions based on the security definitions for bit commitment given in Ref. \cite{KWW12} in Appendix \ref{sec:formalSecDefn}. One can easily check that the simple protocol given in Section \ref{subsec:SimpProt} satisfies these requirements (if we consider $m=1$, $t=0$ and $\delta = 0$). \\

Lastly, it should be noted that the binding condition for EBC is weaker than the corresponding binding condition for bit commitment. In EBC, Alice is only committed to a classical value if the commitment is \emph{opened}. This makes EBC a weaker primitive than bit commitment. For example, we cannot use an EBC protocol instead of bit commitment to implement oblivious transfer (OT) using the protocol given in Ref. \cite{Crepeau94}. If we attempt to do so in a straightforward fashion, then the OT protocol created could possibly be susceptible to a superposition attack by Bob. However, we should note that even though EBC is weaker than bit commitment, it cannot be implemented quantum mecahnically in the two-party setting without additional assumptions. The no-go theorem for two-party quantum bit commitment \cite{Mayers97,Lo97} precludes the possibility of a protocol which is both hiding and binding after the commit phase, as EBC is. Finally, we believe that it should be possible to appropriately define tasks like input swapping and coin flipping in the temporarily trusted third party model and then use the protocol given in this paper to implement these, since implementing these tasks using BC requires for all the commitments to be opened. We leave the formal definitions and security proofs for the implementation of these tasks for future work.

\section{Robust Protocol for EBC based on BB84 states}

In this section, we will develop a protocol for EBC using BB84 states which is robust to noise and a small non-zero number of dishonest trusted nodes. This will be done by modifying the simple protocol presented in Section \ref{subsec:SimpProt}. We use tools like privacy amplification and classical error correcting codes for this purpose. Before we proceed further, however, we describe our channel assumptions and our noise model.

\subsection{Channel assumptions}

We assume that the communication between the participants is done over secure (private and authenticated) channels between each pair of participants. We assume that the classical channels are noiseless, but we allow the quantum channels to be noisy. The fact that the quantum channels are noisy somewhat complicates the authentication part of the security guarantee: we cannot simply guarantee that the state output by the channel will be the state which was input. For simplicity, we assume the following, weaker, authentication guarantee with a simple model of adversarial noise: if the quantum channel in one direction between a pair of participants is used $n$ times at some timestep, then at least $(1 - \gamma) n$ of these transmissions will be transmitted perfectly, while the other at most $\gamma n$ transmissions might be arbitrarily corrupted by the adversary. Note that this is sufficient to model some random noise, for example, depolarizing noise, except with negligible probability.\\

Also note that we could use a shared secret key between sender and receiver together with variants of various quantum cryptography techniques, e.g., that of the trap authentication scheme~\cite{BGS13}, to obtain guarantees that with high probability, the actual channel acts as a  superposition over such type of bounded noise channels (in the sense of the adversarial channels of~\cite{LS08}). We leave a detailed analysis of how to implement such a variant of our guarantee in practical settings and the proof of security of our scheme in such settings for future work.
Also note that we do not assume any minimal noise level on the channels; guarantees on minimal noise level can be sufficient to allow one to implement cryptographic primitives like bit commitment (see, e.g., Ref.~\cite{Cre97}).\\

We assume that time-stamping is performed, for example, by broadcasting end-of-time-step signals (with potential abort if some participant is dishonest). As a result, we do not require any synchronicity assumptions. 

\subsection{High-level description}

We consider a setting with $m$ trusted third parties, at most $t$ of which can be dishonest. To understand the tools and the structure of the robust protocol, we try to extend the simple protocol presented in Section \ref{subsec:SimpProt} in a straightforward fashion. Imagine that Alice wishes to commit to the string $x \in \{ 0,1 \}^k$ (this can be selected randomly to perform randomized EBC). To commit to $x$, during the commit phase, Alice randomly selects a basis $\theta \in_R \{ 0,1 \}^k $ and a key $v \in \{ 0, 1 \}^k$,  and distributes the state $ \ket{\psi} = H^\theta \ket{x \oplus v}$ to the $m$-trusted parties, giving $k/m $ qubits to each party. Further, she sends $\theta$ and $v$ to Bob. In case Alice chooses to open, the trusted parties simply forward their shares to Bob and in case she chooses to erase, they send their states back to Alice. If everyone is honest during the protocol and there is no noise, then the string decoded by Bob, $\hat{x}$, is the same as $x$. However, if some qubits sent to Bob are corrupted by the dishonest trusted nodes or noise then $\hat{x} \neq x$. In order to circumvent this problem, we have Alice first encode her string $x$ as $y = \Enc(x)$ using a pre-agreed $[n,k,d]$-classical error correcting code (ECC) with appropriate parameters and then sends it to the trusted parties. Not only would this allow the protocol to be robust to noise, it will also help ensure that the protocol is binding for Alice by making sure that Alice and the adversarial nodes cannot change their data so much as to change the value of the committed string x. Further, if in such a protocol $t$ third parties collaborate with Bob, then they would know $y$ for $\frac{t}{m} n$ positions. This would provide Bob with some partial information about Alice's commitment $x$. Therefore, we require Alice to use privacy amplification (PA) to extract a shorter commitment $c = \Ext (x, r)$, which would be almost independent of her adversaries' share using a randomness extractor $\Ext$. Here $r \in_R \mathcal{R}$ is picked by Alice before the commit phase and sent to Bob during the commit phase. In the next section, we will show how we can choose the right parameters to ensure that the other security requirements are also met. \\

We describe the structure of our protocol before we discuss our parameter choices. Fix an error correcting code with encoding map $\Enc$ and minimum distance $d$, and the randomness extractor $\Ext$. Our protocol has the following structure: \\

\emph{Commit:} Alice randomly selects a $x \in_R \ZO^k$, $z \in_R \ZO^n$, $\theta \in_R \ZO^n$ and $r \in_R \mathcal{R}$. She outputs the random commitment $c = \Ext (x, r)$. Let $y := \Enc(x)$ and $u := y \oplus z $. She then distributes the qubits of $\ket{\psi} := H^\theta \ket{u}$ equally between the trusted nodes and privately sends the classical information $(\theta, z, r)$ to Bob. \\

\emph{Open: } Alice publicly announces 'open' and sends $x$ to Bob privately. The trusted parties forward their shares of $\ket{\psi}$ to Bob. Bob measures $\ket{\psi}$ in $\theta$ basis and checks if the outcome agrees with $x$. If the measured outcome passes the tests, then Bob computes and outputs ${c} := \Ext(x,r)$.\\

\emph{Erase: } Alice publicly announces 'erase'. During the erase phase, the trusted nodes send their shares of $\ket{\psi}$ back to Alice. \\

\begin{figure}
	 \centering
	 
    \begin{subfigure}[htb]{0.4\textwidth}
        \includegraphics[scale=0.2]{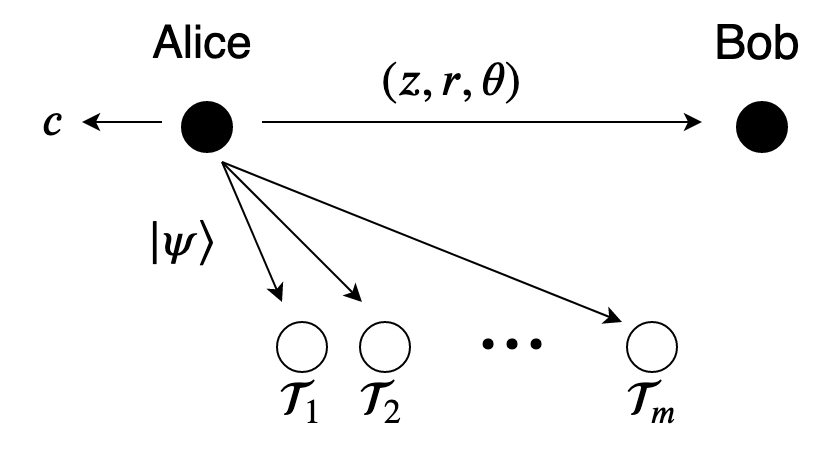}
        \caption{Commit phase}
        \label{fig:commit}
    \end{subfigure}
    \begin{subfigure}[htb]{0.4\textwidth}
    		\centering
        \includegraphics[scale=0.2]{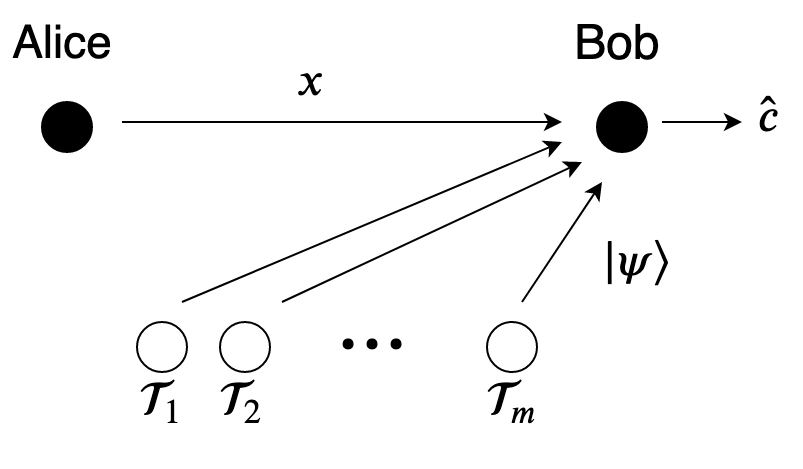}
        \caption{Open phase}
        \label{fig:open}
    \end{subfigure}
    
    \begin{subfigure}[htb]{0.4\textwidth}
    		\centering
        \includegraphics[scale=0.2]{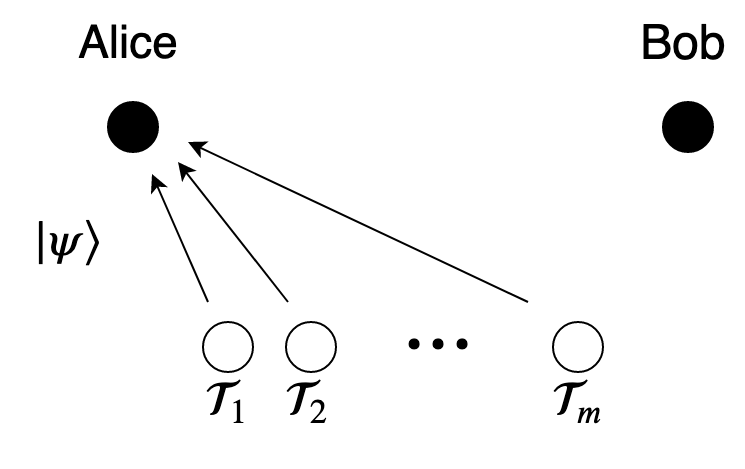}
        \caption{Erase phase}
        \label{fig:erase}
    \end{subfigure}
   \caption{The structure of the protocol is depicted here. During the commit phase, Alice sends $\ket{\psi}$ to the trusted parties and $(z, r, \theta)$ to Bob. In case Alice chooses to open, the quantum state is forwarded to Bob. Whereas, in case of an erase, it is returned back to Alice. }
\end{figure}

\subsection{Choice of parameters}

Consider the protocol given in the structure above. Let's start by looking at the case when both Alice and Bob are honest, and $t$ of the trusted nodes act adversarially. We also account for at most $\gamma n$ of the transmissions being corrupted by the adversaries according to the channel model. Suppose $\tilde{\psi}$ is the state forwarded to Bob by the trusted parties. Let $\hat{u}$ be the string measured by Bob when he measures $\tilde{\psi}$ in the $\theta$ basis and let $\hat{y} := \hat{u} \oplus z$. Since there are at most $t$ dishonest trusted nodes, we have that $h (y, \hat{y}) \leq (t/m + \gamma)n $. So, if Alice encodes $k$ randomly chosen bits $x$ into $y = \Enc (x)$ using a $[n, k, d]$ ECC with minimum distance $d$ satisfying $d > 2 ({t}/{m} + \gamma)n $, then 
$y$ is the only codeword satisfying $h (y, \hat{y}) \leq (t/m + \gamma)n $ and Bob can agree with Alice's $x$ despite $t$ of the trusted nodes being active adversaries. \\

Next, we consider the case when Alice collaborates with these $t$ adversary $\TT_i$'s. In this scenario, is the protocol binding or could Alice hope to change her commitment during the open phase after the commit phase? Suppose that the minimum distance of the ECC is $d = {2 t n}/{m} + 1$. Then, it is possible that Alice could pick $\tilde{y}$ such that there exist $x_1$  and $x_2$ and corresponding codewords $y_1 = \Enc (x_1)$ and $y_2 = \Enc (x_2)$ satisfying $h(\tilde{y}, y_1) = {tn}/{m}$ and $h(\tilde{y}, y_2) = {tn}/{m} + 1$. Alice could then collaborate with the $t$ adversarial trusted nodes to make Bob decode either $x_1$ or $x_2$ at the open phase. In order to avoid this, we require an ECC with a minimum distance $d > 4 ({t}/{m} + \gamma)n$, and only have Bob accept the commitment if $\hat{y}$ is within a distance $({t}/{m} + \gamma)n$ of an actual codeword. Otherwise, Bob outputs "failure" because he is convinced that Alice did not provide a valid commitment. In this way, we can ensure that Alice cannot change the value of the commitment that she opens after the commit phase.\\

Finally, we look at the hiding properties of the protocol. First, observe that after the commit phase Bob and the adversarial trusted nodes have access to at most $(t/m+\gamma)n$ information about $y$. Irrespective of the ECC used this reveals at most $(t/m+\gamma)n$ bits of information about $x$. Hence, we want to ensure that privacy amplification is applied such that $(t/m+\gamma)n$ bits of information about $x$ reveal almost no information about Alice's commitment which is $c = \Ext (x, r)$. For this purpose, we would like the parameter $k$ of the ECC to be large enough compared to $({t}/{m} +\gamma)n$, so that the output length $\ell$ of the randomness extractor used during privacy amplification is $\Theta (n)$ and we can get exponential security in $n$. Therefore, we choose $k = \Theta(n)$, and $\ell = \Theta(n) $ such that $k - (t/m + \gamma)n - \ell = \Omega(n)$. (Here $\Omega(n)$ refers to functions asymptotically lower bounded by $n$ up to constant factors and $\Theta (n)$ to functions both lower and upper bounded by $n$ up to constant factors.)\\

In this way, the information held by the adversary is essentially independent of Alice's commitment $c$ after the commit phase, as well as after the protocol in both the case of an open and an erase. After the protocol, the $(m-t)$ honest-but-curious trusted nodes completely forget the quantum states they held and hence any information about $x$ either by sending their shares back to Alice (erase) or by sending it to Bob (open). As a result, they have no information about $c$ either at the end of the protocol. Furthermore, the information held by the adversaries in both the case of an open and an erase is lesser than $({t}/{m} +\gamma)n$. Thus, even if all the trusted parties and adversaries get together they would be almost independent of Alice's commitment.

\subsection{Protocol}

Given a security parameter $n$, Alice and Bob agree on an $[n, k, d]$ ECC with encoding map $\Enc: \ZO^k  \rightarrow \ZO^n $, as well as a randomness extractor $\Ext: \ZO^k \times \R \rightarrow \ZO^\ell$ for some $\ell$ which we will now define.\\

Let $\gamma$ be the tolerable noise level. We consider a setting with $m$ temporarily trusted nodes at most $t$ of which act dishonestly during the protocol. As discussed above, we fix the various parameters for ECC and PA as follows. For some parameters $\delta_c >0$ and $\delta^\prime >0$:

\begin{itemize}

\item $k = (\frac{t}{m} +  \gamma + \delta_c + \delta^\prime)n$

\item $\ell = \delta_c  n$

\item $d = 4 (\frac{t}{m} + \gamma) n + 1$

\end{itemize}

\emph{Existence of ECC and PA schemes with these parameters:} It should be noted that $(t/m + \gamma)$ should be \emph{small} to allow for the existence of ECC with the above parameters. For example, we can use the Gilbert Varshamov bound~\cite{MS77} to prove that such codes exist for parameters $(t, m, \gamma)$ for which 
\begin{align*}
\frac{t}{m}  + \gamma < 1- H_2\left(4 \left(\frac{t}{m}  + \gamma \right)\right) 
\end{align*}
for large enough $n$ (here $H_2(\cdot)$ is the binary entropy function). This is inequality is satisfied as long as $t/m +\gamma \leq 0.083$. The existence of PA schemes follows from the privacy amplification theorem (see, e.g., Ref.~\cite{Renner05} for an early version robust to quantum side information). \\

\noindent The three phases of the protocol are as follows.

\paragraph{Commit phase}

\begin{enumerate}
	\item Alice randomly picks $x \in_R \ZO^k$, $z \in_R \ZO^n$, $r \in_R \R$, $\theta \in_R \ZO^n$.
	\item Alice computes $c = \Ext (x, r)$, $y = \Enc (x)$, $u = y \oplus z$.
	\item Alice sends $z, r, \theta$ to Bob, who acknowledges reception through broadcast.
	\item Alice prepares $\ket{\psi} = H^\theta \ket{u}$.
	\item Alice distributes the qubits of $\ket{\psi}$ evenly between $\TT_1 \ldots \TT_{m}$ (sends $n/m$ qubits to each trusted party).
	\item $\TT_1 \ldots \TT_{m}$ verify that $\psi$ is in span $\{ \ket{0}, \ket{1} \}^{n}$, and all acknowledge reception through broadcast.
	\item Alice outputs $c \in \ZO^\ell$.
\end{enumerate}


\paragraph{Open phase}

\begin{enumerate}
	\item On input "open", Alice broadcasts "open" and sends $x$ to Bob.
	\item $\TT_1 \ldots \TT_{m}$ send $\psi$ to Bob.
	\item Bob measures $H^{\theta} (\psi)$ in computational basis, gets outcome $\hat{u}$.
	\item Bob decodes $\hat{y} = \hat{u} \oplus z$.
	\item Bob computes $y^\prime = \Enc (x)$  and $h = h (\hat{y}, y^\prime)$.
	\item If $h > ( \frac{t}{m}  +  \gamma ) n$, Bob outputs $\hat{c} = 0^\ell$ and $\sfB = \fail$, else output $\hat{c} = \Ext (x, r)$ and $\sfB = \suc$. Bob broadcasts $\sfB$ and Alice outputs $\sfA = \sfB$.
\end{enumerate}


\paragraph{Erase phase}

\begin{enumerate}
	\item On input "erase", Alice broadcasts "erase".
	\item $\TT_1 \ldots \TT_{m}$ send $\psi$ back to Alice.
	\item Alice measures $H^{\theta} (\psi)$ in computational basis, gets outcome $\hat{u}$.
	\item Alice computes $\hat{y} = \hat{u} \oplus z$ and $h = h (\hat{y}, y)$.
	\item If $h > ( \frac{t}{m}  +  \gamma ) n$, Alice outputs $\sfA = \fail$, else she output $\sfA = \erase$. 
	\item Alice broadcasts $\sfB$ and Bob outputs $\sfB = \sfA$ and $\hat{c} = 0^\ell$.
\end{enumerate}

\section{Security of the protocol}

We now informally prove that our protocol satisfies the security requirements given in Section \ref{subsec:SecReq}. Throughout we assume that we are in a setting with $m$ temporarily trusted third parties $\TT_i$'s out of which at most $t$ can be dishonest during the protocol, and that $\gamma$ is the acceptable rate of noise for the quantum channels. After the completion of the protocol, the trusted parties may bring the information they gathered during the protocol together and collaborate with the adversaries. Once again, we point the reader to Appendix \ref{subsec:SecProofs} for formal security definitions and their proofs.

\subsection{Correctness: Dishonest trusted nodes cannot change the commitment}
\label{sec:corr-open}

\begin{claim}

For honest Alice and honest Bob, the protocol is correct: If Alice outputs $c$ during the commit phase, then Bob outputs $\hat{c} = c$ and $\sfB = \suc$ during the open phase.

\end{claim}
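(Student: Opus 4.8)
The plan is to trace a single honest execution of the commit-then-open protocol and show that Bob's acceptance test is necessarily passed, with his decoded string matching Alice's. First I would fix the values sampled by honest Alice during the commit phase: $x \in_R \ZO^k$, $z \in_R \ZO^n$, $\theta \in_R \ZO^n$, $r \in_R \R$, together with $y = \Enc(x)$, $u = y \oplus z$, $c = \Ext(x,r)$, and the prepared state $\ket{\psi} = H^\theta \ket{u}$. Since Bob receives $(z,r,\theta)$ privately and correctly (the classical channels are noiseless and private), Bob computes $y' = \Enc(x) = y$ from the $x$ that honest Alice sends him during the open phase, so the codeword Bob compares against is exactly Alice's codeword.

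The core step is to bound the Hamming distance $h = h(\hat{y}, y')$ that Bob tests against the threshold $(t/m + \gamma)n$. Here I would combine the two sources of error on the qubits that traverse the commit-to-open route through the trusted nodes. The honest-but-curious nodes (at least $m - t$ of them) return their shares undisturbed; by the $\delta = 0$ specialization justified just before the protocol (Lemma~\ref{lemm:HbCDecouple} with $\delta=0$, i.e. honest-but-curious parties act honestly), each such node forwards exactly the qubits it held, so Bob's measurement of $H^\theta(\psi)$ in the computational basis recovers $u$ on those positions. The at most $t$ dishonest nodes control at most $tn/m$ qubit positions and can corrupt them arbitrarily, and the channel model permits at most $\gamma n$ further transmissions to be corrupted. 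Hence $\hat{u}$ differs from $u$ in at most $(t/m + \gamma)n$ positions, and since $\hat{y} = \hat{u} \oplus z$ and $y = u \oplus z$, we get $h(\hat{y}, y) \leq (t/m + \gamma)n$. I would state this Hamming bound as the key inequality:
\begin{align*}
h(\hat{y}, y') = h(\hat{y}, y) \leq \left(\frac{t}{m} + \gamma\right) n.
\end{align*}

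With this bound in hand the conclusion follows immediately from the open-phase test. Because $h \leq (t/m + \gamma)n$, Bob does not take the failure branch; he sets $\sfB = \suc$ and outputs $\hat{c} = \Ext(x,r)$. Since this uses the same $x$ Alice committed to and the same $r$ Alice sent, $\hat{c} = \Ext(x,r) = c$, and Alice outputs $\sfA = \sfB = \suc$. I expect the main obstacle to be the bookkeeping that ties the abstract corruption budget to the actual measured positions: I must argue that the union of dishonest-node positions and channel-corrupted transmissions genuinely stays within $(t/m + \gamma)n$ (worst case they could overlap, which only helps), and I must invoke the $\delta = 0$ honest-but-curious guarantee to certify that uncorrupted transmissions carry exactly the original qubits so that measuring in the $\theta$ basis returns $u$ without additional error. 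Once these two facts are pinned down, the distance bound and hence correctness are routine.
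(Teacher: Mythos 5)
Your proposal is correct and follows essentially the same route as the paper's own argument: both reduce correctness to the Hamming-distance bound $h(\hat{y}, y) \leq (t/m + \gamma)n$, obtained because the at most $t$ dishonest nodes plus adversarial channel noise can only touch that many qubit positions while the honest-but-curious nodes (at $\delta = 0$) return their shares intact, after which Bob's threshold test passes and $\hat{c} = \Ext(x,r) = c$. Your version merely makes explicit the bookkeeping (noiseless private classical channels, XOR with $z$ preserving the distance, possible overlap of corruption sources) that the paper leaves implicit.
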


Suppose Alice follows the protocol honestly and prepares $(x, r, z, \theta)$ and $\ket{\psi}$ as required by the protocol. Then, her output is $c= \Ext(x,r)$. Since, at least $(m-t)$ of the trusted nodes are honest-but-curious during the protocol, the $\hat{y}$ measured by Bob satisfies $h (y, \hat{y}) \leq (t/m + \gamma)n$ as the adversarial trusted nodes only have access to at most $(t/m + \gamma)n$ qubits. Thus, Bob accepts Alice's $x$ and outputs $\hat{c}$ which is equal to $c$.


\subsection{Binding for Alice after the commit phase}
\label{sec:bind}

\begin{claim}

For an honest Bob, the protocol is binding for Alice after the execution of the commit phase: If Bob is honest, then, after the commit phase, there exists a classical variable $\tilde{C}$ such that if Alice chooses to open the commitment then either Bob accepts the commitment ($\sfB = \suc$) and receives an output $\hat{C}$ such that $\hat{C} = \tilde{C}$, or Bob rejects the commitment ($\sfB = \fail$). (This guarantee only makes sense in the case that the open protocol is run after the commit protocol; if an erase protocol is run instead, then Alice does not have to be committed to a classical value anymore.)

\end{claim}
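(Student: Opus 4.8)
The plan is to define the committed value $\tilde{C}$ directly from the joint state at the end of the commit phase, and then show by a purely combinatorial coding argument that this is the only value an honest Bob can ever accept in the open phase. Since Bob is honest, he holds $(z,\theta,r)$ as fixed classical data after the commit phase. At least $m-t$ of the trusted nodes are honest-but-curious, so they collectively hold $(1-t/m)n$ of the committed qubits; by the forgetting property of Lemma~\ref{lemm:HbCDecouple} (and exactly so for $\delta=0$) these qubits are stored untouched and are independent of anything the adversary does between the commit and open phases. I would therefore introduce a \emph{conceptual} measurement, performed on the post-commit state, of all honest nodes' qubits (positions $S$, with $|S|\ge (1-t/m)n$) in the $\theta$ basis, producing $\hat{U}_S$ and setting $\hat{Y}_S := \hat{U}_S \oplus z_S$. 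Define $\tilde{C} := \Ext(x^\ast,r)$ if there is a unique codeword $\Enc(x^\ast)$ consistent with $\hat{Y}_S$ in the sense made precise below, and $\tilde{C} := 0^\ell$ otherwise. Because this measurement acts only on systems the adversary cannot touch after the commit phase, $\tilde{C}$ is well defined at the end of the commit phase, before Alice chooses open or erase.

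Next comes the combinatorial core. Suppose, for contradiction, that two executions of the open phase lead Bob to accept two distinct claimed messages $x_1 \ne x_2$, with codewords $y_1 = \Enc(x_1)$ and $y_2 = \Enc(x_2)$. In the run accepting $y_j$, let $G_j \subseteq S$ be the positions that are both held by honest nodes and transmitted faithfully to Bob during that open phase; the channel model together with the bound of $t$ dishonest nodes gives $|[n]\setminus G_j| \le (t/m+\gamma)n$. On $G_j$ Bob's measured string equals the conceptual $\hat{Y}_{G_j}$ (the same projective $\theta$-basis measurement on the same untouched qubits), and Bob's acceptance rule $h(\hat{y}, y_j)\le (t/m+\gamma)n$ forces $h(y_j|_{G_j}, \hat{Y}_{G_j}) \le (t/m+\gamma)n$. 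Intersecting, set $G := G_1 \cap G_2$, so that $|[n]\setminus G| \le 2(t/m+\gamma)n$; the triangle inequality then yields $h(y_1|_G, y_2|_G) \le 2(t/m+\gamma)n$. On the other hand, distinct codewords satisfy $h(y_1,y_2) \ge d = 4(t/m+\gamma)n + 1$, and since they can differ on at most $|[n]\setminus G|$ positions outside $G$ we get $h(y_1|_G, y_2|_G) \ge d - 2(t/m+\gamma)n = 2(t/m+\gamma)n + 1$, a contradiction. Hence at most one codeword can ever be accepted; it is precisely the $y^\ast$ used to define $\tilde{C}$, so whenever Bob accepts ($\sfB = \suc$) he outputs $\hat{c} = \Ext(x^\ast,r) = \tilde{C}$, and otherwise $\sfB = \fail$. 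This is exactly where the factor $4$ in the minimum distance is consumed: it absorbs the doubling caused by the two runs having different good sets.

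The main obstacle, and the place demanding care, is that a dishonest Alice need not send computational/eigenstates: she may distribute an arbitrary, possibly entangled, state to the nodes. I would address this in two steps. First, $\tilde{C}$ must be defined via a measurement rather than a pre-existing classical value, which is why it is the random variable produced by the conceptual $\theta$-basis measurement above; this is legitimate because the binding requirement only asks for the \emph{existence} of such a classical variable after the commit phase. Second, I must justify that the conceptual measurement outcome on $G$ genuinely coincides with Bob's actual open-phase outcome on those positions. This is where the honest-but-curious guarantee is essential: by Lemma~\ref{lemm:HbCDecouple} the honest nodes return exactly the qubits they received (tensored with an independent memory), so the reduced state on the honest positions is untouched by the adversary, and repeatability of the projective $\theta$-basis measurement makes Bob's outcome on $G$ consistent with $\hat{Y}_G$. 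With $\delta = 0$ this is exact; for $\delta > 0$ the same argument goes through up to an $O(\sqrt{\delta})$ penalty, as carried out formally in Appendix~\ref{subsec:SecProofs}.
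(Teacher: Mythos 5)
Your proposal is correct and takes essentially the same route as the paper's own proof: the paper's formal argument (Appendix~\ref{subsubsec:BindingProof}) likewise defines $\tilde{C}$ via a conceptual/simulator measurement of only the honest-but-curious nodes' qubits in the $\theta$ basis, uses repeatability of that measurement on the untouched positions, and exploits $d = 4(t/m+\gamma)n + 1$ so that at most one codeword can pass Bob's $(t/m+\gamma)n$-ball acceptance test. Your two-run, intersected-good-sets contradiction is merely a repackaging of the paper's nearest-codeword triangle-inequality count, consuming the factor $4$ in the minimum distance in exactly the same way.
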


Suppose Alice sends some $\tilde{\rho}$ to the trusted nodes and $(\theta, r, z)$ to Bob. Once they are handed over to the trusted nodes, all the qubits of $\tilde{\rho}$, except for at most $({t}/{m} + \gamma)n$ qubits which are controlled by the dishonest $\TT_i$'s and adversarial noise, remain unchanged. Therefore, if we can show that Alice can't change her commitment by modifying any set of $({t}/{m} + \gamma)n$ qubits we will be done. This is achieved by using a classical ECC with distance $d > 4 ({t}/{m} + \gamma) n$ and having Bob reject the received message $\hat{y}$  when there is no codeword in a $({t}/{m} + \gamma) n$ ball around $\hat{y}$. \\

The fact that Alice is committed to a classical variable $\tilde{C}$ after the commit phase if the commitment were to be opened can be seen by using a simulator based argument. Fundamentally, if the open phase were to be conducted right after the commit phase then Bob would measure a random variable $\tilde{Y}$ when he measures $\tilde{\rho}$ in the $\theta$ basis and xors the output with $z$. He could then use $\tilde{Y}$ to figure out Alice's commitment $\tilde{C}$. It will be shown that once Alice commits to a particular instance of this random variable, she cannot change it during the open phase. A formal simulator based argument for this is given in Appendix \ref{subsubsec:BindingProof}.\\

After the commit phase, we define $\tilde{Y}$ as above. Let us consider a particular instance of this random variable $\tilde{y}$. Let $\bar{y}$ be a codeword at a distance less than or equal to $2({t}/{m} + \gamma) n$ away from $\tilde{y}$. Note that there exists at most one such $\bar{y}$   since the minimum distance $d$ satisfies $d = 4({t}/{m} + \gamma) n + 1$. 
If there is no such $\bar{y}$, then even after modifying $({t}/{m} + \gamma) n$ positions of $\tilde{y}$ the distance from any codeword is still greater than  $({t}/{m} + \gamma) n$ and Bob rejects any $x$ Alice tries to open, so let us focus on the case where there exists a unique such $\bar{y}$.
We will show that in this instance Bob either accepts the commitment corresponding to $\bar{y}$ or the protocol ends in failure. Let $\hat{y}$ be as defined in the protocol. Alice's actions can either bring $\hat{y}$ closer to $\bar{y}$ or take it away from it. If her actions manage to get $\hat{y}$ within a distance of $({t}/{m} + \gamma) n$ from $\bar{y}$, then the commitment corresponding to $\bar{y}$ would be accepted by Bob. On the other hand, no matter how she modifies her string, she cannot come within a distance of $({t}/{m} + \gamma) n$ of some other codeword $y^\prime$, since the distance between $y^\prime$ and $\tilde{y}$ would be greater than $2 ({t}/{m} + \gamma) n$ and she only has access to $({t}/{m} + \gamma)n$ qubits. Since for every instance of $\tilde{Y}$ Alice cannot change her commitment after the commit phase, the protocol is binding. 


\subsection{Hiding for Alice against the adversaries after the commit phase}
\label{sec:hid-commit}

\begin{claim}

The protocol is hiding for honest Alice after commit: If Alice is honest and she receives the output $c$ during the commit phase, then, after the commit phase and before the open or erase phase, Bob and the dishonest trusted nodes together can only extract negligible information about $c$, i.e., their joint state is almost indpendent of Alice's commitment. 

\end{claim}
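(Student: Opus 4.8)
The plan is to pin down exactly what the adversary (Bob together with the at most $t$ dishonest nodes and the at most $\gamma n$ transmissions corrupted en route) holds after the commit phase, and then to invoke privacy amplification to show that this side information is essentially decoupled from the commitment $c = \Ext(x,r)$. After commit, the adversary's register $E$ consists of the classical strings $(\theta, z, r)$ that Bob received, together with whichever qubits of $\ket{\psi} = H^\theta \ket{u}$ (where $u = \Enc(x) \oplus z$) reside at dishonest nodes or were corrupted in transit. Crucially, the honest-but-curious nodes are \emph{not} part of this adversary and still hold their shares, so the qubits available to $E$ form a subset $S$ of positions with $|S| \le (t/m + \gamma)n$: at most $tn/m$ from the dishonest nodes and at most $\gamma n$ from adversarial noise.

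First I would lower bound the conditional min-entropy $\Hmin(X \mid E)$. Conditioned on $(\theta, z, r)$, which the adversary knows exactly, the string $x$ is still uniform on $\ZO^k$, and the adversary's quantum state is the product $\bigotimes_{i \in S} H^{\theta_i} \ket{u_i}$. Because $\theta$ is known, these single-qubit states are mutually orthogonal across distinct values of $g(x) := \{\Enc(x)_i\}_{i \in S}$, so the adversary's optimal measurement recovers precisely the $|S|$ bits $g(x)$ and nothing more; conditioning on the quantum register thus collapses to conditioning on the classical image $g(X)$. Since $g$ takes at most $2^{|S|}$ values, the guessing probability satisfies $P_{\mathrm{guess}}(X \mid E) \le 2^{|S| - k}$, giving
\[
\Hmin(X \mid E) \;\ge\; k - |S| \;\ge\; k - \left(\tfrac{t}{m} + \gamma\right)n \;=\; (\delta_c + \delta')n .
\]
I would stress that this argument uses no linearity of the code: revealing any $|S|$ (even nonlinear) bits of $\Enc(x)$ can reduce the min-entropy of a uniform $x$ by at most $|S|$.

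Next I would apply the privacy amplification theorem against quantum side information (e.g.\ the version of Ref.~\cite{Renner05}) with the strong extractor $\Ext$ and public seed $r$ (which is independent of $x$ and of the rest of $E$, so it does not affect the min-entropy above). With output length $\ell = \delta_c n$ this yields
\[
\tfrac{1}{2}\big\| \rho_{CE} - \tau_C \otimes \rho_{E} \big\|_1 \;\le\; 2^{-\frac{1}{2}(\Hmin(X \mid E) - \ell)} \;=\; 2^{-\frac{1}{2}\delta' n},
\]
with $\tau_C$ the maximally mixed state on the commitment register. Since $k - (t/m+\gamma)n - \ell = \delta' n = \Omega(n)$, this error is exponentially small in $n$, establishing that $E$ is almost independent of $c$.

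The main obstacle, and the step I would treat most carefully, is the min-entropy bound: one must argue rigorously that the adversary's qubits leak only the $|S|$ bits $g(x)$ and confer no further \emph{quantum} guessing advantage. The cleanest justification is that, once $\theta$ is revealed, the relevant qubit states are perfectly distinguishable classical encodings, so the quantum conditional min-entropy equals the classical $\Hmin(X \mid g(X))$, while the positions outside $S$ are simply absent from $E$ and hence irrelevant. A secondary point needing care is the bookkeeping of where the $\gamma n$ adversarial-noise qubits sit relative to the $tn/m$ dishonest-node qubits, to confirm that $|S|$ never exceeds $(t/m+\gamma)n$ no matter how the adversary allocates its corruptions.
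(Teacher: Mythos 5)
Your proposal is correct and follows essentially the same route as the paper: bound $\Hmin(X \mid T_E, Z, R, \Theta) \geq k - (t/m+\gamma)n = (\delta_c + \delta')n$ by reducing the adversary's quantum side information to at most $(t/m+\gamma)n$ classical bits of $y$, then invoke privacy amplification with $\ell = \delta_c n$ to get distance $2^{-\Omega(\delta' n)}$ from uniform. The only cosmetic difference is the justification of the quantum-to-classical reduction --- the paper applies the data-processing inequality (since $T_E$ is generated from $(Y_E, Z, \Theta)$) together with the min-entropy chain rule, while you argue via perfect distinguishability of the basis-revealed qubits and a direct guessing-probability computation --- and both yield the identical bound.
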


Let $E$ be the set of dishonest trusted parties and $\bar{E}$ the set of honest-but-curious nodes. Together the adversaries  have access to at most $({t}/{m}+\gamma)n$ qubits of $\ket{\psi}$, $z, r$ and $\theta$. Let's say the adversaries' qubits are in the quantum memory $T_E$. Further, suppose that these qubits encode the string $y_E$ (the string $y$ corresponding to the positions of the qubits held by $E$). Then, we have 
\begin{align*}
	\Hmin (X| T_E, Z,R, \Theta) &= \Hmin (Y| T_E, Z,R, \Theta) \\
	& \geq \Hmin (Y| Y_E, Z, R, \Theta) \\
	& = \Hmin (Y| Y_E) \\
	& \geq \Hmin (Y) - |Y_E| \\
	&=\Hmin (X) - |Y_E| \\
	& \geq k - (\frac{t}{m} + \gamma) \\
	& = (\delta_c +\delta^\prime)  n .
\end{align*}
The first equality follows since $Y = \Enc(X)$ and $\Enc$ is a deterministic one-to-one function, the inequality on the second line is a result of applying the data processing inequality, the equality in the third line follows from the fact that $Z,R, \Theta$ are independent of $Y$ and the inequality in the fourth line follows from the chain rule for min-entropy ($|Y_E|$ is the length of the string $Y_E$).\\

Since $ \ell = \delta_c n$, $\Hmin (X | T_E, Z, R, \Theta) - \ell \geq \delta^\prime n$ and by PA theorem, it holds that for any set of adversaries with access limited to these many qubits, the quantum memory $T_E$ is almost independant of the output $c$ of Alice's randomized commitment, even if one is given $z, r$ and $\theta$. Hence, the claim holds true.


\subsection{Hiding for Alice against all the trusted parties and Bob together after the erase phase}
\label{sec:hid-erase}

\begin{claim}

The protocol is hiding for honest Alice after the erase phase: If Alice is honest and she receives the output $c$ during the commit phase, then, after the erase phase, Bob and the $\TT_i$'s together can only extract negligible information about $c$, i.e., their joint state is almost independent of Alice's commitment. 

\end{claim}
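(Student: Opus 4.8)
The plan is to reduce this claim to the commit-hiding claim already established in Section~\ref{sec:hid-commit}, exploiting the fact that the honest-but-curious trusted nodes forget their shares the moment they return them during the erase phase. First I would inventory the side information held by the coalition of Bob and all $m$ trusted parties after the erase phase. Bob holds the classical data $(\theta, z, r)$ exactly as in the commit phase. Writing $E$ for the set of (at most $t$) dishonest nodes and $\bar E = [m] \setminus E$ for the honest-but-curious nodes, I would invoke the forgetting property discussed in Section~\ref{subsec:HbCDefn} (formalized in Lemma~\ref{lemm:HbCDecouple}): once a node in $\bar E$ returns its share of $\ket{\psi}$ to Alice, its retained memory is (for $\delta=0$, exactly) decoupled from that share, so the nodes in $\bar E$ contribute no information about $y$, hence none about $x$, to the coalition.

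Consequently, after the erase phase the only party-held register correlated with $X$ is the memory $T_E$ of the dishonest nodes, which by construction encodes at most the $(t/m+\gamma)n$ positions of $y$ that $E$ controlled (including the $\gamma n$ positions the adversary may corrupt via the channel model). In other words, the joint state of the coalition after erase carries no more information about $X$ than the joint state of Bob together with the dishonest nodes carried after the commit phase --- precisely the configuration analyzed in Section~\ref{sec:hid-commit}. I would also check that the erase phase itself leaks nothing new: the only fresh public messages are the broadcast of ``erase'' and of the flag $\sfA$, and for honest Alice facing at most $t$ dishonest nodes the flag is $\erase$ (it is a function of the adversary's tampering, not of $X$), while Alice's measurement outcome $\hat u$ remains private.

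With the side information thus pinned down to $(T_E, Z, R, \Theta)$, I would reuse the min-entropy chain from the commit-hiding proof verbatim to obtain
\begin{align*}
\Hmin(X \mid T_E, Z, R, \Theta) \;\geq\; k - \Big(\tfrac{t}{m} + \gamma\Big)n \;=\; (\delta_c + \delta^\prime)\, n.
\end{align*}
Since $\ell = \delta_c n$, this leaves a gap $\Hmin(X \mid T_E, Z, R, \Theta) - \ell \geq \delta^\prime n = \Omega(n)$, so the privacy amplification theorem guarantees that $c = \Ext(X, R)$ is $2^{-\Omega(n)}$-close to uniform and independent of $(T_E, Z, R, \Theta)$, even given the seed $r$. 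This establishes the claim.

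The hard part is the decoupling of the honest-but-curious memories: the reduction works only because returning the shares genuinely erases $\bar E$'s information about $y$. For $\delta = 0$ this is exact, but in the approximate regime each of the $m-t$ honest nodes contributes an $O(\sqrt\delta)$ error via Lemma~\ref{lemm:HbCDecouple}, and one must verify that these deviations accumulate at most linearly so that the total trace-distance error remains negligible before the extractor is applied; the subsequent min-entropy and privacy-amplification steps are then identical to the commit case.
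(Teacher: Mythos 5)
Your proposal is correct and follows essentially the same route as the paper's proof in Section~\ref{sec:hid-erase}: the honest-but-curious nodes forget the shares they return (so $T_{\bar{E}}$ is decoupled from $x$), the erase phase leaks nothing new to Bob, and the commit-phase bound $\Hmin(X \mid T_E, Z, R, \Theta) \geq (\delta_c + \delta^\prime) n$ together with privacy amplification then applies verbatim. Your closing observation that the $O(\sqrt{\delta})$ decoupling errors accumulate linearly over the $m-t$ honest-but-curious nodes is precisely what the paper establishes via Lemma~\ref{lemm:CollectiveHbC} in Appendix~\ref{subsec:SecProofs}, so nothing is missing.
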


At the end of the erase phase, once again all the parties together have at most $({t}/{m} + \gamma)n$ qubits of $\ket{\psi}$. The rest of the qubits are completely forgotten by the $(m-t)$ quantum honest-but-curious nodes, and the content of their register $T_{\bar{E}}$ is independent of $x$. Further, Bob does not get any additional information during the erase phase. Therefore, similar to above, we get 
\begin{align*}
	\Hmin (X| T_E, T_{\bar{E}}, Z, R, \Theta) \geq (\delta_c +\delta^\prime)  n
\end{align*}
and the hiding property holds for the $\TT_i$'s and Bob after the erase phase. 

\subsection{Hiding for Alice and Bob against all the trusted parties together after the open phase}
\label{sec:hid-open-corr}

\begin{claim}

The protocol is hiding for honest Alice and honest Bob after an open phase: If Alice and Bob are honest and Alice's commitment is $c$, then, after an open phase, the $\TT_i$'s together can only extract negligible information about $c$, i.e., their joint state is almost independent of Alice's commitment.

\end{claim}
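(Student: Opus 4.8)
The plan is to recognise that this is the gentlest of the hiding requirements, since the coalition now consists only of the trusted parties and Bob is honest, so he never reveals $\theta$, $z$, $r$ or $x$ to them. First I would inventory what the $\TT_i$'s hold at the close of the open phase: their quantum memories, which I split into $T_{\bar{E}}$ for the at least $(m-t)$ honest-but-curious nodes and $T_E$ for the at most $t$ dishonest ones, together with the public broadcasts --- Alice's ``open'' announcement and Bob's flag $\sfB$. For honest Alice and Bob with at most $\gamma n$ corrupted transmissions one has $\sfB = \suc$ deterministically, so the broadcasts are constant and carry no information about $c$.

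Next I would control the two blocks of memory. In the open phase every honest-but-curious node forwards its share of $\ket\psi$ to Bob, so the forgetting lemma (Lemma \ref{lemm:HbCDecouple}, now with recipient $P = $ Bob rather than Alice) applies verbatim and leaves $T_{\bar{E}}$ in tensor product with the forwarded qubits, hence independent of $X$. The dishonest nodes touched at most $(t/m)n$ qubits plus the at most $\gamma n$ corrupted by noise, so $T_E$ is obtained by a channel acting on at most $(t/m+\gamma)n$ positions of $\ket\psi$.

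The min-entropy bookkeeping is then exactly that of the commit- and erase-hiding proofs, with one bookkeeping device: I hand the coalition the classical data $Z, R, \Theta$ that they do not in fact possess, which can only lower the conditional min-entropy. Given $\Theta$ and $Z$ the dishonest nodes' qubits are a deterministic function of the corresponding positions $Y_E$ of $Y = \Enc(X)$, so running the same chain as in Section~\ref{sec:hid-commit} yields
\begin{align*}
\Hmin(X \mid T_E, T_{\bar{E}}, Z, R, \Theta) \ \geq\ \Hmin(Y \mid Y_E) \ \geq\ \Hmin(Y) - |Y_E| \ \geq\ k - (t/m+\gamma)n \ =\ (\delta_c + \delta')n,
\end{align*}
which is precisely the bound already recorded for the erase phase in Section~\ref{sec:hid-erase}. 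Since $\ell = \delta_c n$, the surplus min-entropy is $\delta' n$, so the privacy amplification theorem gives that $c = \Ext(X,R)$ is almost independent of $(T_E, T_{\bar{E}}, Z, R, \Theta)$; tracing out the extra registers $Z, R, \Theta$ preserves this independence, establishing hiding against the trusted parties' joint state.

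The step I expect to require the most care is the application of the forgetting lemma in the open rather than the erase setting: the lemma is phrased for an honest-but-curious party surrendering its state to some party $P$, and here $P$ is Bob, so I would confirm that its statement does not privilege the identity of the recipient --- it does not. The only other point needing a line of justification is the ``free gift'' of $Z, R, \Theta$ to the coalition, which is sound because additional side information never increases conditional min-entropy and because independence from a larger system implies independence from any subsystem.
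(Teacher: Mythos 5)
Your proposal is correct and follows essentially the same route as the paper, whose own proof is just a one-line reduction to the erase-phase argument of Section~\ref{sec:hid-erase}: you have simply written out that argument in full for the open setting (forgetting lemma for the honest-but-curious nodes now forwarding to Bob, the $(t/m+\gamma)n$ qubit bound for $T_E$, the min-entropy chain with $Z,R,\Theta$ granted for free, and privacy amplification with surplus $\delta' n$). Your two flagged care-points are both already resolved in the paper itself --- Lemma~\ref{lemm:HbCDecouple} is explicitly stated for the open case with Bob as recipient, and the same free conditioning on $Z,R,\Theta$ is used in Sections~\ref{sec:hid-commit} and~\ref{sec:hid-erase} --- so there is no gap.
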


The information held by all the trusted nodes in this case is lesser than their information in the case of an erase. The claim follows easily by following the argument in Section~\ref{sec:hid-erase}.


\section{Additional security properties}
\label{sec:AddnSecProp}
\subsection{Decoupling of honest-but-curious nodes during the protocol}
\label{sec:HbCDecoupling}

Our protocol satisfies a stronger hiding property than one given in Section \ref{subsec:SecReq}. In addition to being hiding for Bob and the dishonest trusted nodes during the commit stage, the protocol is also hiding for the honest-but-curious trusted nodes in the sense that the qubits they hold locally are also approximately independent of Alice's commitment. A similar strong security definition was used by Ref. \cite{DNS10} for their protocols on secure evaluation of unitaries against honest-but-curious adversaries. In particular, their security definition demands that at any point in the protocol the state held by the honest-but-curious party can be simulated by him by performing local operations on his share of the initial or final state. This definition ensures that the honest-but-curious party never has access to more information than it needs to. It is based on a definition given in Ref. \cite{Watrous02} for statistical zero knowledge proofs. However, this requirement is too strong for cryptographic protocol with temporarily trusted parties, since in the quantum domain one can force the honest-but-curious parties to 'forget' quantum information. So, it is possible that at some point in a protocol a honest-but-curious party holds too much information in its quantum state, but the protocol is such that at a later point the party is forced to return the information and hence 'forgets' it. This is the behaviour that we try to encapsulate in our security definition of erasable bit commitment in a setting with temporarily trusted parties. We only insist that at the end of the protocol, the state held by the honest-but-curious nodes are independent of the commitment. Nevertheless, as mentioned earlier, the protocol presented in this paper satisfies the following stronger security requirement used by Ref. \cite{DNS10}. 

\begin{claim}

For an honest Alice, the protocol is locally hiding for all the honest-but-curious nodes: the local quantum state of the honest-but-curious nodes is almost independent of Alice's output $c$ throughout the protocol.

\end{claim}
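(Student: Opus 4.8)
The plan is to show that at every step of the protocol the joint state of the qubits held by the honest-but-curious nodes together with Alice's commitment register $C$ is (exactly, in the $\delta=0$ idealization) a product state, so their local view carries no information about $c$. The crucial observation is that the honest-but-curious nodes never receive $\theta$, $z$, or $r$ --- these are sent privately to Bob --- so from their local vantage point all of this randomness must be traced out. Concretely, I would fix the set $\bar E$ of honest-but-curious nodes, let $S$ be the positions of the qubits they collectively hold after the distribution step, and write the global classical-quantum state with Alice's registers $X, Z, \Theta, R, C$ and the nodes' register $T_S$. Marginalizing onto $T_S$ and $C$ averages each qubit $H^{\theta_i}\ket{u_i}$ over a uniform $\theta_i$ and a uniform $u_i = y_i \oplus z_i$ (uniform because $z_i$ is uniform and independent of $y$); since $\tfrac14(\kb{0}{0}+\kb{1}{1}+\kb{+}{+}+\kb{-}{-}) = I/2$, the reduced state on $T_S$ is $I/2^{|S|}$, independent of $x$, $z$, and $\theta$. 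Hence $\rho_{T_S C} = \rho_C \otimes I/2^{|S|}$, giving the desired independence throughout the commit phase.

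It remains to check the other steps. Before the nodes receive any qubits their register is trivial, so independence is immediate, and the computation above covers every step from distribution onward. For the open and erase phases, the honest-but-curious nodes forward (respectively return) their qubits, and by the decoupling property for honest-but-curious parties (Lemma~\ref{lemm:HbCDecouple}) specialized to $\delta = 0$, their residual memory is a fixed state $\tau$ uncorrelated with the transmitted qubits and hence with $c$. Combining these observations yields that at every step the honest-but-curious nodes' local state is independent of $C$, which is even stronger than the DNS10-style simulator requirement: the simulator simply outputs the fixed state $I/2^{|S|}$ (or $\tau$).

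The main point requiring care --- rather than a genuine obstacle --- is the marginalization step: one must be explicit that it is precisely the absence of $\Theta$ and $Z$ from the nodes' view that collapses each qubit to $I/2$, and emphasize that this holds even when the nodes collectively hold nearly all $n$ qubits. This is exactly why no bound on the number of qubits is needed here, in contrast to the hiding claim of Section~\ref{sec:hid-commit}, where the adversary does hold the keys $\theta, z, r$ and one must instead limit the number of accessible qubits. The only remaining subtlety is invoking the forgetting guarantee for each honest-but-curious node in the open/erase phases, which is immediate once $\delta = 0$ is assumed as in the main text.
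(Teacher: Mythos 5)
Your proposal is correct, but it proves the claim by a genuinely different route than the paper. The paper's proof is a one-line reuse of the dimension-counting argument of Section~\ref{sec:hid-commit}: for each honest-but-curious node $\TT_i$, $i \in \bar{E}$, holding only $n/m$ qubits in $T_{\bar{E}_i}$, the min-entropy chain rule gives $\Hmin(X|T_{\bar{E}_i}) \geq \Hmin(X) - \log(|T_{\bar{E}_i}|) \geq (\delta_c+\delta^\prime)n$, and privacy amplification then makes $c = \Ext(x,r)$ approximately uniform given the node's memory. Your one-time-pad computation --- tracing out $Z$ and $\Theta$, which the nodes never receive, collapses each qubit $H^{\theta_i}\kb{u_i}{u_i}H^{\theta_i}$ to $I/2$ because $u = y \oplus z$ is uniform and independent of $y$ --- is more elementary and yields a strictly stronger conclusion: exact (at $\delta = 0$) independence $\rho_{C T_S} = \rho_C \otimes I/2^{|S|}$ with no bound on $|S|$, so it even covers the \emph{joint} state of all honest-but-curious nodes, which the paper's chain-rule bound cannot reach (jointly they hold $(m-t)n/m > k$ qubits, making $\Hmin(X) - \log(|T_{\bar{E}}|)$ trivial); it also sidesteps the fact that the paper's per-node bound $k - n/m \geq (\delta_c + \delta^\prime)n$ silently requires $t + \gamma m \geq 1$. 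You additionally handle the open and erase phases explicitly via Lemma~\ref{lemm:HbCDecouple} at $\delta = 0$, doing justice to the ``throughout the protocol'' wording, whereas the paper's displayed bound only addresses the state after the commit phase. What the paper's route buys in exchange is uniformity of technique: it stays entirely within the smooth-min-entropy/privacy-amplification framework used for every other security statement and for the formal definition of hiding for honest-but-curious nodes in the appendix, so the approximate-independence statement emerges in exactly the form those definitions require, at the cost of an approximate rather than exact guarantee.
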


This can be proven using the same argument as that of hiding for adveraries (Sec. \ref{sec:hid-commit}). After the commit phase, an honest-but-curious node $\TT_i$ for $i \in \bar{E}$ has access to only $n/m$ qubits in his quantum memory $T_{{\bar{E}_i}}$. Therefore, we once again have the following min-entropy guarantee
\begin{align*}
	\Hmin (X| T_{{\bar{E}_i}}) & \geq \Hmin (X) - \log (|T_{{\bar{E}_i}}|)\\
	& \geq (\delta_c +\delta^\prime)  n.
\end{align*}

\subsection{Expungement on successful runs}
\label{sec:forget-main}

In this section, we show that our protocol has an additional property whenever it is successful, namely that it guarantees that the commitment is indeed 'expunged' from the memories of the trusted nodes, or that the trusted nodes indeed 'forget' about the commitment. In order to prove this, we only need to assume that the trusted nodes did not collaborate with Bob. We do not even need to assume that the trusted nodes acted honestly. The assumption that Bob does not collaborate with the trusted nodes is necessary, since he knows the choice of basis $\theta$, and given the basis, the trusted parties could measure $\ket{\psi}$ without disturbing it. In a sense, all we need to implement our protocol are three sets of parties which do not communicate with each other unless required by the protocol; this is similar to the type of assumptions used in relativistic bit commitment and  we further explore the link between our setting and that of relativistic bit commitment in the conclusion. Hence, it seems that the trust assumption required by our protocol is fairly weak. Further, this security claim also allows us to move closer to the definition of honest-but-curious parties in multi-party protocols given in Ref. \cite{DNS10}, which permits such parties to communicate and collaborate with each other.

\subsubsection{Expungement after the erase phase}

\begin{claim}
For an honest Alice and $\sfA = \erase$, if Bob does not collaborate with the trusted parties during the execution of the protocol, then the protocol is hiding for all the trusted parties and Bob together after the execution of an erase phase, even if the trusted parties collaborate and act arbitrarily during the protocol.
\end{claim}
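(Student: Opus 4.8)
The plan is to reduce the many-party adversary to a single one and then run an information-versus-disturbance argument in the spirit of BB84-based key distribution. Since Bob does not share $(\theta, z, r)$ with the trusted nodes during the protocol, and the nodes are allowed to collude and act arbitrarily, I would model all of $\TT_1, \ldots, \TT_m$ jointly as a single adversary $\A$ that receives the full $n$-qubit register $\ket{\psi} = H^\theta \ket{u}$, applies an arbitrary channel producing a memory register $T$ together with $n$ qubits that it returns to Alice, and crucially does not have access to $\theta$ at any point before the end of the protocol. The coalition that forms after the erase phase then holds $(T, \theta, z, r)$, since Bob contributes $(\theta, z, r)$ only once the protocol is over. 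The goal is to show that $\Hmin(X \mid T, \Theta, Z, R)$, conditioned on the event $\sfA = \erase$, exceeds $\ell + \Omega(n)$, after which the privacy amplification theorem immediately gives that $c = \Ext(x,r)$ is approximately independent of $(T, \theta, z, r)$, which is exactly the claimed hiding.

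The heart of the argument is the observation that $\sfA = \erase$ is a demanding test for $\A$. Alice re-measures the returned qubits in the $\theta$ basis, sets $\hat y = \hat u \oplus z$, and accepts only if $h(\hat y, y) \leq (t/m + \gamma) n$; passing therefore certifies that the returned state agrees with $u$ in the (adversarially unknown) $\theta$ basis on all but $(t/m+\gamma)n$ positions. I would work in the entanglement-based picture: Alice prepares EPR pairs, sends one half of each to $\A$, and defines $u$ by measuring her retained halves in the $\theta$ basis, so that $u$, the returned-register outcome $\hat u$, and their complementary-basis counterparts are all well defined. Because $\A$ had to return a low-$\theta$-error state without knowing $\theta$, an information-disturbance / complementarity bound forces the complementary-basis (``phase'') disagreement to be small as well, and an entropic uncertainty relation with quantum side information then lower-bounds $\Hmin(U \mid T \Theta)$ conditioned on passing. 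Revealing $z$ turns this into a bound on $\Hmin(Y \mid T \Theta Z)$, and since $\Enc$ is an injective deterministic map and $R$ is independent of everything else, this equals $\Hmin(X \mid T \Theta Z R)$.

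It then remains to check the arithmetic: the uncertainty-relation bound costs roughly $n\,H_2(t/m+\gamma)$ bits (plus lower-order and parameter-estimation slack), so I would verify that for the parameters $k = (t/m + \gamma + \delta_c + \delta')n$ and $\ell = \delta_c n$ the residual min-entropy still satisfies $\Hmin(X \mid T \Theta Z R) - \ell = \Omega(n)$ whenever $t/m + \gamma$ lies in the admissible small regime, exactly as in Section \ref{sec:hid-commit}. With this surplus, the privacy amplification theorem yields the stated near-independence and hence the claim.

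I expect the main obstacle to be the quantitative information-disturbance step under adversarial conditioning. Unlike the honest-but-curious case, I cannot invoke the decoupling/forgetting Lemma \ref{lemm:HbCDecouple}, since the nodes now act arbitrarily; the whole content is that passing the $\theta$-basis test, for a $\theta$ the adversary never learns, is essentially as strong as forgetting. Making this rigorous requires (i) the correct conditional form of the uncertainty relation so that conditioning on the non-i.i.d.\ acceptance event is handled cleanly, and (ii) a sampling/parameter-estimation argument controlling an adversary that tries to concentrate its errors so as to pass the $(t/m+\gamma)n$-error test while still extracting information about $u$. These are the steps where the BB84-style analysis must be imported carefully rather than merely quoted.
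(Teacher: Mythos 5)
Your overall strategy coincides with the paper's: collapse the colluding nodes into a single adversary ignorant of $\theta$, view Alice after the erase phase as ``QKD-Bob,'' invoke an entropic uncertainty relation with quantum side information conditioned on passing the $\theta$-basis test to bound $\Hmin^{\epsilon}(U \mid T\Theta)$, and finish with privacy amplification. The genuine gap is the step you dispose of in a single clause: ``revealing $z$ turns this into a bound on $\Hmin(Y \mid T\Theta Z)$.'' The uncertainty relation gives $\Hmin^{\epsilon}(U \mid T\Theta) \geq n\left(1 - H_2(\gamma + \mu_\epsilon)\right) - \delta_\epsilon$, but $Z$ is an $n$-bit string correlated with $U$ (indeed $(U,Z)$ together determine $X$), and the generic cost of conditioning on $Z$ is up to $n$ bits of min-entropy --- which exceeds the entire bound, since $n(1-H_2(\cdot)) - n < 0$. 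Note also that $\Hmin(U \mid Z) = k$, so no bound conditioned on $Z$ can exceed $k$; any correct conclusion must degrade from $n(1-H_2(\cdot))$ to roughly $k - nH_2(\cdot)$, and a structural argument is needed to show the conditioning loss is only $n-k$ rather than $n$. This is precisely what the paper's Appendix \ref{sec:ForgettingProof} supplies: because the state the adversary handles depends only on $(u,\theta)$, conditioned on $U$ the string $Z$ is independent of the adversary's register, giving the Markov-type identities $\Hmin(Z \mid UE) = \Hmin(Z \mid U) = \Hmax(Z \mid U) = k$; chaining four smooth-entropy chain rules from \cite{VDTR13} (at a cost of $6f_\epsilon$ and smoothing parameter $7\epsilon$) then yields $\Hmin^{7\epsilon}(X \mid EZ) \geq k - nH_2(\gamma + \mu_\epsilon) - \delta_\epsilon - 6f_\epsilon$. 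The obstacles you flag --- conditioning on the acceptance event and the sampling argument --- are real but are imported wholesale from the finite-key analysis you invoke \cite{TLGR12}; the $Z$-incorporation is the part the paper explicitly singles out in the main text as \emph{not} following from the standard toolkit (``This is not quite sufficient for our purposes\dots''), and your sketch contains no mechanism for it.

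A secondary inaccuracy is your arithmetic claim that the surplus works out ``exactly as in Section \ref{sec:hid-commit}.'' There the adversary's information costs $(t/m+\gamma)n$ bits, whereas here the uncertainty-relation route costs $nH_2(t/m+\gamma+\mu_\epsilon)$, and $H_2(p) > p$ throughout the admissible regime. With the nominal parameters $k = (t/m+\gamma+\delta_c+\delta')n$ and $\ell = \delta_c n$, the surplus $k - nH_2(\cdot) - \ell$ is $\Omega(n)$ only if $\delta' > H_2(t/m+\gamma+\mu_\epsilon) - (t/m+\gamma)$, which at the boundary value $t/m+\gamma \approx 0.083$ forces $\delta' \gtrsim 0.33$ and then collides with the Gilbert--Varshamov constraint guaranteeing the code's existence. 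The paper concedes exactly this point (``depending on the noise parameters it might be necessary to change the some of the protocol parameters''), so the correct conclusion is that expungement holds in a more restrictive noise regime, or after re-tuning $\delta'$ --- not in the same regime as commit-phase hiding.
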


The setup for this is similar to one for QKD, with Alice after the erase phase acting as "QKD-Bob". Since, we assume that Bob does not collaborate with the trusted parties, they must handle $\ket{\psi} = H^\theta \ket{u}$ and return it to Alice without knowing anything about $\theta$ or $u$. Upon receiving $\psi$ back, Alice measures it according to $\theta$ and only accepts if the measured error rate is at most $\gamma$. This is indeed similar to a QKD setup, and standard methods~\cite{TR11, TLGR12, BF10} allow us to show that, even if $\theta$ is revealed, in the case of an accepted run,
\begin{align}
\Hmin^\epsilon (U | T \Theta) \geq n (1 - H_2 (\gamma + \mu_\epsilon)) - \delta_\epsilon,
\end{align}
for parameters $\mu_\epsilon$, $\delta_\epsilon$, and $T$ the memory held by the trusted parties after the erase phase.
This is not quite sufficient for our purposes, since we want to make sure that the min-entropy about $x$ is high even if, after the protocol, $B$ and the $\TT_i$'s get together, i.e., we want to prove that 
$ 
\Hmin^{\epsilon^\prime} (X | T \Theta Z)
$ 
is large. Using among other tools, the known chain rules for smooth min-entropy~\cite{VDTR13}, we show precisely this in the Appendix \ref{sec:ForgettingProof}. 

Note that we could even allow for Bob and a small set $\TT_E$ of the trusted nodes such that $E \subset [m]$ and $|E| \leq t$ to collaborate with Bob, as long as the remaining $(m-t)$ trusted nodes do not interact with $\TT_E$ and Bob. 

\subsubsection{Expungement after the open phase}

\begin{claim}
For honest Alice and honest Bob, and $\sfA = \sfB = success$ after the execution of an open, the protocol is hiding for all the trusted nodes even if they collaborate and act arbitrarily during the protocol.
\end{claim}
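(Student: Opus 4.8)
The plan is to observe that the open phase is, even more directly than the erase phase, an instance of a QKD-style argument, and then to reuse the smooth-entropy analysis already developed for the erase case (Sec.~\ref{sec:forget-main}, Appendix~\ref{sec:ForgettingProof}) almost verbatim. In the erase case Alice played the role of ``QKD-Bob''; here that role is taken over by the honest Bob, who receives the forwarded state, measures it in the $\theta$ basis, and declares $\sfB = \suc$ only when the observed error rate is at most $(\frac{t}{m}+\gamma)$. Thus the triple (Alice as QKD-sender, Bob as QKD-receiver, trusted nodes as a collective eavesdropper) realises a standard one-way QKD scenario. The crucial ingredient is the hypothesis that the trusted nodes do not collaborate with Bob during the protocol: since $\theta$ is held only by Alice and Bob, the trusted nodes must process $\ket{\psi}=H^\theta\ket{u}$ without any knowledge of the encoding basis, exactly as a QKD eavesdropper does.

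First I would fix the classical-quantum state describing, after the trusted nodes forward their (possibly tampered) shares to Bob but before Bob measures, the encoded string $U$, the eavesdropper memory $T$ held jointly by all trusted nodes, Bob's received quantum register $B$, and the basis $\Theta$. The event $\sfB=\suc$ is precisely the statement that Bob's measurement outcome agrees with $U$ up to a fraction $(\frac{t}{m}+\gamma)$ of errors, i.e.\ that $\Hmax^\epsilon(U \mid B\,\Theta)$ is small. Feeding this into the entropic uncertainty relation for smooth entropies, in the same form used in~\cite{TR11, TLGR12, BF10}, yields a bound of the shape
\begin{align*}
	\Hmin^\epsilon (U \mid T\, \Theta) \geq n\bigl(1 - H_2(\tfrac{t}{m}+\gamma + \mu_\epsilon)\bigr) - \delta_\epsilon,
\end{align*}
which holds even when $\theta$ is revealed after the trust period expires. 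No honesty of the trusted nodes is needed here: arbitrary deviation during the protocol only shows up as additional errors seen by Bob, and a successful run caps exactly that error rate.

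Next I would convert this guarantee on $U$ into the desired guarantee on the commitment. Using $U = Y\oplus Z$ with $Y=\Enc(X)$ a deterministic injection, together with the fact that $Z,R$ are independent of $Y$ and the known chain rules for smooth min-entropy~\cite{VDTR13}, I would derive a lower bound on $\Hmin^{\epsilon^\prime}(X \mid T\,\Theta\, Z)$ of order $\Omega(n)$; this is the one place where the parameter choices $k=(\frac{t}{m}+\gamma+\delta_c+\delta^\prime)n$ and $\ell=\delta_c n$ must be checked to leave $\Omega(n)$ extractable entropy after subtracting $\ell$. Privacy amplification with $\Ext$ then gives that $c=\Ext(x,r)$ is $\epsilon^{\prime\prime}$-independent of $(T,\Theta,Z,R)$, i.e.\ the pooled information of all trusted nodes is almost uncorrelated with Alice's commitment after a successful open, which is the claim. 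Since this mirrors the erase analysis step for step, the bulk of the work reduces to Appendix~\ref{sec:ForgettingProof}.

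The main obstacle I anticipate is the same as in the erase case but warrants care in the open setting: one must argue the uncertainty-relation bound for an eavesdropper $T$ that may have tampered with arbitrarily many qubits en route to Bob, rather than merely the $(\frac{t}{m}+\gamma)n$ qubits controlled by the at-most-$t$ dishonest nodes in Sec.~\ref{sec:hid-open-corr}. The resolution is that the QKD argument is agnostic to how the errors arise: it is driven solely by the error rate that an honest Bob certifies at the open test, and a $\suc$ outcome bounds that rate regardless of the trusted nodes' strategy. I would also take care that, because Bob is honest and does not pool his own information with the trusted nodes, the claim concerns the trusted nodes alone; were Bob to join the coalition he would trivially contribute $x$ (sent to him during open) and no hiding could hold.
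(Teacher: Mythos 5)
Your proposal matches the paper's argument essentially verbatim: the paper proves this claim by pointing to the erase-phase expungement proof with honest Bob now playing the role of ``QKD-Bob'', i.e.\ the uncertainty-relation bound $\Hmin^\epsilon(U \mid T\,\Theta)$ from~\cite{TR11, TLGR12, BF10} followed by the chain-rule derivation of $\Hmin^{\epsilon'}(X \mid T\,\Theta\,Z)$ in Appendix~\ref{sec:ForgettingProof} and privacy amplification. Your observations that the bound is driven solely by the error rate Bob certifies at the open test (so no honesty of the trusted nodes is needed) and that Bob must not join the coalition are exactly the points the paper relies on, so no further changes are needed.
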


The argument follows similarly as for the previous claim, now with Bob acting as "QKD-Bob", except that the adversary has even less information since we assume Alice and Bob are both honest.

\section{Impossibility of oblivious transfer and strong bit commitment in the temporarily trusted party setting}
\label{sec:OTImpossible}
A natural question is whether we can implement a stronger version of bit commitment in the trusted party setting, where Alice and Bob only need to trust the third parties during the commit phase, and after the commit phase Alice is committed to a classical value in addition to the security requirements presented in Section \ref{subsec:SecReq}. We will refer to such a protocol as \emph{strong bit commitment}. If a composable protocol for oblivious transfer in the trusted party setting exists\footnote{An oblivious transfer protocol has two inputs $x_0, x_1$ for Alice, and one input for Bob $c \in \{0,1\}$. At the end of the protocol, Bob receives the output $x_c$. In the temporarily trusted setting, an OT protocol should be such that Alice and the third parties together can only extract negligible information about Bob's input $c$, and similarly Bob and the third parties together should not be able to extract more than negligible information about $x_{1-c}$.}, then we could use it to implement strong bit commitment using the standard construction of BC from OT \cite{Kilian88}. \\

We can follow the Mayers and Lo-Chau impossibility proofs for two-party bit commitment to show that it is impossible to implement strong bit commitment in this setting. Suppose to the contrary that such a protocol exists. Observe that the trusted parties can purify their actions simply by keeping their auxiliary qubits instead of discarding them. This is valid honest-but-curious behaviour, since an external auditor will not be able to distinguish between the modified protocol and the original protocol. Thus, the new protocol should also have the same security properties as the original protocol. Then, following the original two-party impossibility proofs, Alice and Bob can also purify their actions and the communication. Suppose, Alice commits to the bit $b=0$ in the modified protocol. At the end of the commit phase, the joint state of the parties is $\ket{\psi_{ABT}^{(0)}}$. Since, the protocol is hiding for both Bob and the third parties together after the commit phase (the third parties are only trusted till the end of the commit phase), we would have $\psi_{BT}^{(0)} \approx \psi_{BT}^{(1)}$. This would imply the existence of a unitary $U_A$ on Alice's share, which would allow her to switch between her commitments, i.e., $(U_A \otimes I_{BT}) \ket{\psi_{ABT}^{(0)}} \approx \ket{\psi_{ABT}^{(1)}}$. Further, as we stated above this no-go result for strong-BC also rules out the existence of a composable oblivious transfer protocol in this setting.


\section{Conclusion}

\subsection{Avoiding quantum no-go results}

In this work, we introduced a new primitive, erasable bit commitment, as well as a new paradigm, that of temporary quantum trust.
The temporary trust assumption allows us to avoid no-go results for quantum bit commitment: if Bob and the trusted nodes were to put their information together right after the commit phase, they would be able to extract the value of Alice's commitment. Hence, the protocol is not hiding against a coalition of Bob and all the trusted nodes. In fact, no protocol which is hiding for both Bob and the trusted nodes can be binding for Alice according to the no-go theorem for quantum bit commitment. As already noted, our robust protocol, which makes use of a constant number of trusted nodes, is even robust against a coalition of Bob and a small constant fraction of trusted nodes, as long as the other trusted nodes are indeed acting honest-but-curiously and do not communicate with Bob during the execution period. Note that for security against a dishonest Alice, we cannot guarantee that she is still committed to a classical value at the end of an erase phase. Hence, this primitive is weaker than a standard two-party bit commitment whenever an erase rather than an open might be required.

\subsection{Comparison to other work}

\subsubsection{Bit commitment in bounded and noisy quantum storage model:}

In the bounded and noisy quantum storage model \cite{Damgard05, Wehner08, KWW12}, security of bit commitment is ensured by assuming that the parties do not have access to perfect quantum memory. This assumption allows for the implementation of the stronger, regular version of bit commitment.
In our protocol, for the duration of the trust period, the trusted node must maintain Alice's commitment in memory, but they do not need to further manipulate it. 
In some sense, the requirements here are complementary to those in the noisy storage model. In the noisy storage model, an honest Alice waits for a long enough period for the content of a dishonest Bob's quantum memory to decay sufficiently before announcing her basis. Here, the temporary trust period must be sufficiently short to prevent the contents of the trusted nodes' quantum memories from decaying too much before they are transmitted to Bob. Hence, as progress in physical implementations allows for better and better quantum memory and the power of the noisy storage model decreases, our model allows for longer trust period if so desired.
We believe that efforts towards the physical implementation of this protocol could be an interesting stepping stone towards more complicated multi-node protocols.

\subsubsection{Relativistic bit commitment:}

In relativistic quantum bit commitment \cite{Kent99, Kent11, Kent12, Kaniewski13}, Alice and Bob are split into various nodes at different locations, and the security of the scheme follows from the impossibility of communication between these nodes which is enforced from relativistic constraints.
Hence, Alice has to have complete trust into all of the nodes which are corresponding to her, and similarly for Bob. Contrary to our temporary trust setting, there is no need for nodes which are trusted by both parties. However, the mutual trust between all nodes associated to Alice is usually assumed to be everlasting rather than time-limited as in our setting, and similarly for Bob. Similar to our protocol, relativistic bit commitment is also weaker than bit commitment in the case where the commitment is not opened. 

It would be interesting to see if our protocol can be modified to fit the relativistic framework, by splitting the temporarily trusted nodes into nodes which are temporarily trusted by Alice and nodes which are temporarily trusted by Bob, while keeping only a single main Alice node and a single main Bob node.

\subsubsection{Cheat-sensitive bit commitment:}

In cheat-sensitive bit commitment \cite{Hardy04, BCHKW08, He15}, a dishonest party might be able to cheat with non-trivial probability, but then 
it can also be caught cheating by the honest party with a bounded probability.
In contrast to that, our protocol bounds the probability that a dishonest party trying to cheat is caught and shows that this probability is high.
It might be possible to create stronger versions of cheat-sensitive bit commitment in our trusted party setting. 

\subsubsection{Proving erasure:}

In Ref.~\cite{Unr15}, the authors study revocable quantum time-release encryption. In time-release encryption, one wants that no information about the encrypted message is revealed before some time $T$, while after a fixed time $T'$ such that $T' > T$ the receiver can finally decrypt the message. In revocable quantum time-release encryption, the sender can request back the encryption before time $T$ such that the receiver cannot later gain information to the message. Similar to our erasing procedure, a quantum state is sent back upon a revoking command. Only the setting without trusted third party (even temporarily trusted) is considered in that paper, and accordingly both information-theoretic and complexity-theoretic assumptions are required,  whereas we can prove  information-theoretic security in the temporary trust setting.

In Ref. \cite{CRW19}, the authors investigate whether a server hosting data can information-theoretically prove the deletion of the data. The notion of deletion they use is similar in spirit to our notion of erasure. However, in the setting considered by them, only the case where the client keeps a key much shorter than the message to be hosted and later erased by the server is interesting. Whereas, in our work we are perfectly fine with using a long key. In fact in our main protocol Bob receives a key as long as the message itself. 
Another difference, in the case of an erasure, is that the erasure is validated by a classical message rather than sending back the quantum message as in our case. The work in Ref. \cite{CRW19} was done independently and concurrently with ours.

Another independent and concurrent work is Ref. ~\cite{BI19}. They provide a protocol for a variant of certiﬁed deletion (also with a classical string as proof of deletion) with encryption and prove its security. This might be useful as a subroutine for our application in the trusted third party setting, though no discussion about the commitment property is provided in these works, as it is not the focus there.

\subsubsection{Secure multiparty quantum computation}

In Ref. \cite{Crepeau02}, the authors give a procedure to perform secure multiparty quantum computation (MPQC) as long as the fraction of dishonest parties is less than 1/6, i.e., they show how a set of parties can implement a given quantum circuit on their joint input, without revealing any information about their input to the rest of the parties other than that which can be inferred from their outputs, even if a small subset of the participants is dishonest. We investigate how one might potentially apply their result in the temporarily trusted setting to implement secure two-party computation between Alice and Bob.

Alice, Bob and the input-less temporarily trusted third parties can perform the MPQC protocol. Then, one should be able to implement any quantum circuit between Alice and Bob even if a small subset of the parties is dishonest. If the circuit implements a unitary transform, then no information about Alice's and/or Bob's inputs should remain with the third parties after the computation, since the state held by Alice and Bob (and an inaccessible Reference, if required) after the protocol would be pure. If the circuit does not implement a unitary, then if all the participants purify their actions, the environment system corresponding to the isometric extension of this circuit will be shared between Alice, Bob and the temporarily trusted third parties. In this case, information about Alice's and/or Bob's inputs might potentially leak to the environment system, which could be recovered by the third parties after the protocol. One should formally verify this construction, but it seems that this allows the parties to robustly and securely implement unitary circuits between Alice and Bob with the help of temporarily trusted third parties. One should note that the implementation of unitary extensions of classical functions will not necessarily result in secure implementation of the classical functions themselves. For example, one could implement a unitary extension of oblivious transfer using the MPQC protocol, but this will not result in the same functionality as classical oblivious transfer, since oblivious transfer cannot be implemented in the temporarily trusted setting as we have shown in Section \ref{sec:OTImpossible}. 

The construction used in Ref. \cite{Crepeau02} uses highly complicated quantum states shared between many parties. These states are far from experimental reach. In contrast, our EBC protocol only uses BB84 states. Thus, one might ask if it is possible to use EBC as a simple primitive to implement unitary circuits between Alice and Bob in the temporarily trusted setting.

\subsection{Outlook}

There are many research directions that are sparked off from this work.
A first is to develop EBC protocols which are tailored for optical implementations.
Another is to practically implement a variant of our authentication guarantee for noisy quantum channels from simpler primitives, say shared secret keys, and prove the corresponding security of our scheme in such a model.
It will also be interesting to further study the applications of the erasable bit commitment primitive, and variants. More generally, it will also be interesting to further study the temporary quantum trust setting and what else can be achieved in that setting. Another interesting direction of future research would be to relate this temporary trusted setting with the relativistic setting. In particular, as mentioned above, is there a way to implement an erasable bit commitment, with the help of relativistic constraint, such that some subset of the trusted nodes is only trusted by Alice and the other are only trusted by Bob.

\end{sloppypar}

\section*{Acknowledgement}
We would like to thank the reviewers of QCrypt 2019 and Adrian Kent for their comments and for pointing out the similarity of our setting with relativistic bit commitment, and Sara Zafar Jafarzadeh, Thomas Vidick and Xavier Coiteux-Roy for pointing the connection to Ref. \cite{CRW19}. We are also grateful to the reviewers of IEEE Journal on Selected Areas in Information Theory for their comments. 
We would also like to thank Anne Broadbent for pointing us to her work~\cite{BI19} and also for pointing the connection to Ref.~\cite{Unr15}. This work was performed at the Institute for Quantum Computing, which is supported by Industry Canada and the Perimeter Institute for Theoretical Physics, which is supported in part by the Government of Canada and the Province of Ontario. The research has been supported by NSERC under the Discovery Program, grant number 341495 and by the ARL CDQI program.


\bibliographystyle{unsrt}
\bibliography{erase-BC}

\pagebreak

\pagenumbering{roman}
\section*{Appendix}

\appendix

\section{Channel Model}

As stated in the main body, we study quantum channels with the following authentication guarantee: if the quantum channel in one direction between a pair of participants is used $n$ times in a given timestep, then at least $(1 - \epsilon) n$ of these transmission will be transmitted perfectly, while the other at most $\epsilon n$ transmission might be arbitrarily corrupted by the adversary. 

We now argue that this is sufficient  to model some random noise, e.g., depolarizing noise, except with negligeable probability.

As a model for random channel noise, we use a depolarizing qubit channel, which takes as input a qubit $\rho$ and outputs $(1-\epsilon) \rho$ + $\epsilon \frac{\mathbb{I}}{2}$ for some parameter $\epsilon$. Note that for any $\rho$, $\frac{\mathbb{I}}{2} = \frac{1}{4}(\rho + X \rho X + Y \rho Y + Z \rho Z)$. This is a very general noise model, in particular it is a stronger form of noise than erasure noise, since we can always further decay an erasure channel into a depolarizing channel by outputting $\frac{\mathbb{I}}{2}$ whenever there is an erasure flag. Moreover, if we apply $n$ idenpendent depolarizing channels to $n$ qubits, then, except with probability negligeable in $n$, the output of these channel is a combination of at least $(1 - \frac{3 \epsilon}{4} + \delta) n$ perfectly transmitted qubits plus Pauli errors on the remaining qubits, for some small constant $\delta$. We can then denote $\gamma = 1 - \frac{3 \epsilon}{4} + \delta$ the fraction of perfectly transmitted qubits, except with negligeable probability in $n$.


\section{No-go for classical EBC protocols}
\label{sec:nogo}

In this section, we will show that EBC cannot be implemented classically. The proof of this fact is similar to the proof that unconditionally secure bit commitment cannot be implemented classically. In particular, we show that if the protocol is statistically hiding for Bob and the trusted parties after an erase phase, then the protocol was not binding for Alice in the first place. \\

We assume that all the trusted nodes are honest but curious and that there is no dishonest adversarial node (this is sufficient to prove the result). Moreover, we assume that each party keeps a copy of their message transcripts, since they are allowed to be honest-but-curious. Let $b \in \{ 0, 1 \}$ be Alice's input. The transcripts after each phase of the protocol for honest Alice and Bob are given by:\\

Commit: $\mathcal{M}_b^c := (AB_b^c, AT_b^c, BT_b^c, TT_b^c)$ \\

Open: $\mathcal{M}_b^c \mathcal{M}_b^o := (AB_b^c AB_b^o, AT_b^c AT_b^o, BT_b^c BT_b^o, TT_b^c TT_b^o)$\\

Erase: $\mathcal{M}_b^c \mathcal{M}_b^e := (AB_b^c AB_b^e, AT_b^c AT_b^e, BT_b^c BT_b^e, TT_b^c TT_b^e)$. \\

where the random variable $XY^\text{ph}_b$ represents the transcript of communication between party X and party Y during the phase $\text{ph}$ of the protocol given that Alice's commitment is $b$. The transcript of communication between all the trusted parties and Alice is written as $AT$ for the sake of simplicity. Similarly, $BT$ represents the transcripts between Bob and all of the trusted nodes and $TT$ the transcripts between all of the trusted nodes. Further, let $P_B$ be the random variable of private information and randomness used by Bob and $P_T$ be the random variable of private information and randomness used by all the trusted parties. As we aim to show that Alice can cheat, we assume that Bob follows the honest strategy throughout the protocol. \\

If Bob and the trusted parties get together after an erasure, then all the transcripts would be available to them. The requirement that the protocol be hiding for Bob and the trusted parties after an erasure implies that   
\begin{align*}
	\mathcal{M}_0^c \mathcal{M}_0^e P_B P_T & \approx_\epsilon \mathcal{M}_1^c \mathcal{M}_1^e P_B P_T \\
	\Rightarrow \mathcal{M}_0^c P_B P_T  & \approx_\epsilon \mathcal{M}_1^c P_B P_T 
\end{align*}
using the monotonicity under partial trace. We will now show that a dishonest Alice can use this to cheat. First, she runs the commit phase for $b=0$. Now, if she wishes to open a $0$, she simply runs the open phase honestly. On the other hand, if she wishes to open a $1$, then all she has to do is to run the open phase for $1$. In this case, we have 
\begin{align*}
	\mathcal{M}_0^c P_B P_T & \approx_\epsilon \mathcal{M}_1^c P_B P_T \\
	\Rightarrow \mathcal{M}_0^c \bar{\mathcal{M}}_1^o P_B P_T  & \approx_\epsilon \mathcal{M}_1^c \mathcal{M}_1^o P_B P_T 
\end{align*}
where $\bar{\mathcal{M}}_1^o$ denotes the transcripts produced during the open phase by a dishonest Alice following our procedure. The approximate equality in the second step follows once again from the monotonicity of the trace distance. The right hand side in the second equation above opens $b=1$ due to correctness with high probability ($> 1 -\epsilon$). Thus, except with probability $2\epsilon$, Alice is able to open a $1$ even though she committed to $0$.

\section{Proof of Security for the Erasable Bit Commitment protocol}
\label{sec:formalSecDefn}

\subsection{High-level description}

Our definitions are direct adaptation of those of Ref.~\cite{KWW12} in the two-party setting to the temporarily trusted setting we study here. There, informally, the bit commitment scheme consist of commit and open primitives between two parties, Alice and Bob. First, Alice and Bob execute the commit primitive, where Alice has input $x \in \{ 0, 1 \}^\ell$ and Bob has no input. As output, Bob receives a notification that Alice has chosen an input $x \in \{0, 1 \}^\ell$. Afterwards, they may execute the open protocol, during which Bob either accepts or rejects. If both parties are honest, Bob always accepts and receives the value $x$. If Alice is dishonest, however, we still demand that Bob either outputs the correct value of $x$ or rejects (binding). If Bob is dishonest, he should not be able to gain any information about $x$ before the open protocol is executed (hiding).

Here, for our erasable bit commitment scheme, Alice and Bob use help from temporarily trusted nodes $\TT_1, \ldots, \TT_{m}$ in order to implement the commit and open primitives. We assume that, except for at most $t$ of these nodes which might collaborate adversarially, perhaps with a dishonest Alice or Bob, all these temporarily trusted nodes act (quantum) honest-but-curiously during the protocol, without collaborating with any other nodes during the execution of the protocol. After the execution of the protocol, the trust assumption is lifted: all the temporarily trusted nodes $\TT_1, \ldots, \TT_{m}$ and the dishonest party might gather the data they accumulated during the protocol and act adversarially against the honest parties. In order to complete the execution of the protocol (and lift the corresponding trust assumption) after a commit phase which will not be opened, we add a phase to complement the open phase, which we call an erase phase. This phase forces the temporarily trusted nodes to forget about the commitment before we end the protocol and lift the trust assumption.

As is standard to simplify security definitions and protocols achieving such security, and as is done in Ref.~\cite{KWW12}, we make use of a randomized version of a commitment. Instead of inputting her own string $x $, Alice now receives a random string $c$ at the end of the commit phase.

\subsection{Formal definitions for erasable bit commitment}

To give a more formal definition, we also provide direct adaptation of the definitions of Ref.~\cite{KWW12} to the current setting with temporarily trusted nodes.
Note that we may write the commit, open and erase protocols as CPTPMs $C_{ABT}$, $O_{ABT}$ and $E_{ABT}$, respectively, 
consisting of the local actions of honest Alice, Bob and the temporarily trusted nodes on registers $T = T_1 \ldots T_{m}$, together with any operations 
they may perform on messages that are exchanged.

When all parties are honest, the output of the commit protocol will be a state $C_{ABT} (\rho_{in}) = \rho_{CABT}$ for 
some fixed input state $\rho_{in}$, where $C \in \{0, 1 \}^\ell$ is the classical output of Alice, and $A$, $B$ and $T$ are the internal states of Alice, 
Bob and the temporarily trusted nodes, respectively. If Alice and some nodes $\TT_E$, for $E \subseteq [m]$ and $|E| \leq t$, are dishonest and collaborate,  then they might not follow the protocol, 
and we use $C_{A^\prime B T_E^\prime T_{\bar{E}}}$ to denote the resulting map, where $\bar{E}:=[m] \setminus E$. Note that $C_{A^\prime B T_E^\prime T_{\bar{E}}}$ might 
not have output $C$, hence we simply write $\rho_{A^\prime B T_E^\prime T_{\bar{E}}}$ for the resulting output state, where $A^\prime T_E^\prime$ denote the registers 
of the dishonest Alice and $\TT_E$. Similarly, we use $C_{A B^\prime T_E^\prime T_{\bar{E}}}$ to denote the CPTPM corresponding to the case where Bob and $\TT_E$ are dishonest, 
and write $\rho_{C A B^\prime T_E^\prime T_{\bar{E}}}$ for the resulting output state, where $B^\prime T_E^\prime$ denote the registers of the dishonest Bob and $\TT_E$. 
The case of both Alice and Bob honest but dishonest $\TT_E$ is similar.

 The open protocol can be described similarly. If all parties are honest, the map $O_{AB T}$ creates 
the state $\eta_{C \tilde{C} F} : = (I_{C} \otimes O_{ABT}) (\rho_{C ABT})$, where $\tilde{C} \in \{ 0, 1 \}^\ell$ 
and $F \in \{\suc , \fail \}$ is the classical output of Bob. Again, if Alice and $\TT_E$ are dishonest,  
we use $O_{A^{\prime} B T_E^{\prime } T_{\bar{E}}}$ to denote the resulting 
map with output $\eta_{A^{\prime}  T_E^{\prime}  T_{\bar{E}} \tilde{C} F}$. Similarly, 
we use $O_{C A B^{\prime } T_E^{\prime } }$ to denote the 
CPTPM corresponding to the case where Bob and $\TT_E$ are dishonest, 
and write $\eta_{C  B^{\prime } T_E^{\prime} }$ for the resulting output state.
In the case of both Alice and Bob honest but dishonest $\TT_E$, we use $O_{C A B T_E^{\prime } T_{\bar{E}} }$ 
to denote the CPTPM corresponding to the case  $\TT_E$ are dishonest, and write $\eta_{C \tilde{C} F  T_E^{ \prime } T_{\bar{E}} }$ 
for the resulting output state
also considering the memory content of the quantum honest-but-curious nodes in $\TT_{\bar{E}}$.

The erase protocol can also be described similarly. If all parties are honest, the map $E_{AB T}$ creates 
the state $\zeta_{C  F} : = (I_{C} \otimes E_{ABT}) (\rho_{C ABT})$, where $F = \text{erase} $ is the classical output of Bob. Again, if Alice and $\TT_E$ are dishonest,  
we use $E_{A^{\prime } B T_E^{ \prime } T_{\bar{E}}}$ to denote 
the resulting map with output $\zeta_{A^{\prime }  T_E^{\prime }  F}$. 
Similarly, we use $E_{C A B^{\prime} T_E^{\prime } T_{\bar{E}} }$ to denote 
the CPTPM corresponding to the case where Bob and $\TT_E$ are dishonest, 
and write $\zeta_{C  B^{\prime} T_E^{\prime}  T_{\bar{E}} }$ for the 
resulting output state
also considering the memory content of the quantum honest-but-curious nodes in $\TT_{\bar{E}}$. 
The case of both Alice and Bob honest but dishonest $\TT_E$ is similar. Finally, we use $\tau_{X}$ to denote the state, which is uniformly random over the set $X$, i.e., $\tau_X := \sum_{x \in X} |X|^{-1} \ket{x}\bra{x} $.

The security definitions for EBC as defined here are direct adaptations of the definitions in Ref. \cite{KWW12}, which are themselves a generalization of the ones in Ref. \cite{DFRSS07}, to our setting.

\begin{definition}
\label{def:ebc}

An $(\ell, \epsilon)$ randomized erasable string commitment scheme is a protocol between Alice, Bob and temporarily trusted nodes $\TT_1, \ldots, \TT_{m}$ satisfying the following properties:

\textbf{Correctness:}

Open: If both Alice and Bob are honest, the set of dishonest $\TT_i$, $E \subset [m]$ is such that $|E| \leq t $, and the rest of the $\TT_i$ are quantum honest-but-curious then, after the open protocol, the ideal state $\sigma_{CCFT}$ is defined such that

\begin{enumerate}

\item The distribution of $C$ is uniform, Bob accepts the commitment and the $\TT_i$'s learn nothing about $c$:

$$\sigma_{CFT} = \tau_{\{ 0, 1 \}^\ell } \otimes \kb{\suc}{\suc} \otimes \sigma_T .$$

\item The joint state $\eta_{C \tilde{C} F T}$ created by the real protocol is $\epsilon$-close to the ideal state

$$ \eta_{C \tilde{C} F T} \approx_\epsilon \sigma_{CCFT} .$$

\end{enumerate}

Erase: If both Alice and Bob are honest, $|E| \leq t$ and the rest of the $\TT_i$ are quantum honest-but-curious, then after the erase protocol, the ideal state $\sigma_{C \tilde{C} F T}$ is defined such that

\begin{enumerate}

\item The distribution  of $C$ is uniform, that of $\tilde{C}$ is constant,  Bob outputs an erasure flag and the $\TT_i$'s learn nothing about $c$:

$$\sigma_{C \tilde{C} F T} = \tau_{\{ 0, 1 \}^\ell } \otimes \kb{0}{0} \otimes \kb{erase}{erase} \otimes \sigma_T$$

\item The joint state $\zeta_{C \tilde{C} F T}$ created by the real protocol is $\epsilon$-close to the ideal state

$$ \zeta_{C \tilde{C} F T} \approx_\epsilon \sigma_{C\tilde{C}FT} . $$

\end{enumerate}

\textbf{Security for Alice ($\epsilon$-hiding):}

Commit: If Alice is honest, $|E| \leq t$ and $\TT_{\bar{E}}$ are quantum honest-but-curious, then for any joint state $\rho_{C B^\prime T_E^\prime T_{\bar{E}}}$ created by the commit protocol, Bob and the adversarial nodes do not learn $C$:

$$\rho_{C  B^\prime T_E^\prime } \approx_\epsilon \tau_{\{ 0, 1 \}^\ell } \otimes \rho_{  B^\prime T_E^\prime } $$

Erase: If Alice is honest, $|E| \leq t$ and $\TT_{\bar{E}}$ are quantum honest-but-curious, then for any joint state $\zeta_{C  B^{\prime } T_E^{\prime  }  T_{\bar{E}} }$ created by the erase protocol, the adversaries together with the honest-but-curious nodes do not learn $C$:

$$\zeta_{C  B^{\prime  } T_E^{\prime  }  T_{\bar{E}} } \approx_\epsilon \tau_{\{ 0, 1 \}^\ell } \otimes \zeta_{B^{\prime  } T_E^{\prime }  T_{\bar{E}} } . $$

\textbf{Security for Bob ($\epsilon$-binding):}

If Bob is honest, $|E| \leq t$ and $\TT_{\bar{E}}$ are quantum honest-but-curious then there exists an ideal state $\sigma_{\tilde{C} A^\prime B T_E^\prime T_{\bar{E}}}$ where $\tilde{C}$ is a classical register such that for all open maps $O_{A^{\prime } B T_E^{\prime  } T_{\bar{E}} \rightarrow A^{\prime } \hat{C} F T_E^{\prime  } T_{\bar{E}}}$

\begin{enumerate}

\item Bob almost never accepts $ \hat{C} \not= \tilde{C} $:

$$\mathrm{For}~\sigma^O_{\tilde{C} A^{\prime }  \hat{C} F T_E^{\prime } T_{\bar{E}}} = (I_{\tilde{C}} \otimes O ) (\sigma_{\tilde{C} A^\prime B T_E^\prime T_{\bar{E}}}) , \mathrm{we~have}$$
$$\mathrm{Pr}[\hat{C} \not= \tilde{C}~\mathrm{and}~F=\suc ~] \leq \epsilon . $$

\item If the open map is applied to the real committed state, the joint state produced $\eta_{A^\prime \hat{C} F T_E^\prime T_{\bar{E}}} := O (\rho_{ A^\prime B T_E^\prime T_{\bar{E}}})$ is $\epsilon$-close to the ideal state

$$  \eta_{A^\prime \hat{C} F T_E^\prime T_{\bar{E}}} \approx_\epsilon \sigma^O_{A^\prime \hat{C} F T_E^\prime T_{\bar{E}}} . $$

\end{enumerate}

\end{definition}

Since the temporarily trusted party setting involves multiple parties, one might wish to use a simulator based security definition, where a protocol is said to be secure if the adversaries' state in the real protocol could be extracted from a simulator acting on the ideal protocol. The security definition given above internalises this reduction to the ideal protocol and the subsequent simulation in the security definition itself, that is, the real state of the protocol is compared to an ideal state which would otherwise have been produced by an ideal functionality and then processed by a simulator. We can show the equivalence of these two approaches by considering the ideal functionality for bit commitment in this setting directly as well. After the commit phase, the ideal functionality outputs a uniformly random variable to Alice and gives out trivial outputs to all the other parties. In the real protocol, the state held by Bob and the dishonest trusted third parties are almost independent of Alice's commitment according to the definition given above. Since, these states are independent of Alice's state, a simulator could have produced these from the trivial outputs given out by the ideal functionality. Further, we also show in Sec. \ref{sec:HbCDecoupling} that the states held by each of the honest-but-curious trusted third parties is almost independent of Alice's commitment, which means that these too could have been produced using a simulator (This is not required in our security definition of EBC, but this is satisfied by our protocol nevertheless. It might be interesting to investigate whether the stronger claim provides any advantage operationally or from the point of view of composability.). Similarly, after the erase phase one can show that the outputs of the third parties and Bob can be simulated from the ideal functionality. In the case of a dishonest Alice, the ideal functionality guarantees that for every strategy of Alice and the dishonest third parties, the state produced after the commit phase (which has been referred to as the ideal state in the definition above) is such that there exists a fixed random variable which would be output to Bob if the open phase were to be executed successfully. Our security definition requires that the real state of the protocol after the open phase is close to an ideal state after the open phase.

\subsection{Security proofs}
\label{subsec:SecProofs}
\subsubsection{Intermediate results}
\label{subsubsec:IntRes}

We will prove some Lemmas, which will act as intermediaries to prove security. In particular, we will analyze the actions of the honest-but-curious nodes in the protocol. 

\begin{lemma}[Corollary to Uhlmann's Theorem]
	Suppose that $\rho, \rho^\prime \in D(\mathcal{X})$ such that ${1/2 \Vert \rho - \rho^\prime \Vert_1} < \epsilon$. Then, there exist purifications of $\rho $ and $\rho^\prime $, $\ket{\psi}$ and $\ket{\psi^\prime}$ in $\mathcal{X} \otimes \mathcal{X}$ respectively such that 
	\begin{align}
		\ket{\psi} \bra{\psi} \approx_{\sqrt{2 \epsilon}} \ket{\psi^\prime} \bra{\psi^\prime} 
		\label{eq:UhlmannsTh}
	\end{align}
	\label{lemm:UhlmannsTh}
\end{lemma}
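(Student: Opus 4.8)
The plan is to combine the Fuchs--van de Graaf inequalities, which relate trace distance and fidelity, with Uhlmann's theorem. Recall the fidelity $F(\rho, \rho^\prime) = \Vert \sqrt{\rho}\sqrt{\rho^\prime}\Vert_1$, normalized so that $F = 1$ exactly when $\rho = \rho^\prime$. The one-sided Fuchs--van de Graaf bound reads $1 - F(\rho, \rho^\prime) \leq \tfrac{1}{2}\Vert \rho - \rho^\prime\Vert_1$, and since by hypothesis $\tfrac{1}{2}\Vert \rho - \rho^\prime\Vert_1 < \epsilon$, I immediately get $F(\rho, \rho^\prime) > 1 - \epsilon$. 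This converts the trace-distance assumption into a lower bound on fidelity, which is the quantity that Uhlmann's theorem controls.

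Next I would invoke Uhlmann's theorem in its purification form: fixing any purification $\ket{\psi}$ of $\rho$ in $\mathcal{X} \otimes \mathcal{X}$ (one exists because $\mathrm{rank}(\rho) \leq \dim \mathcal{X}$, so a purifying register of dimension $\dim\mathcal{X}$ suffices), there is a purification $\ket{\psi^\prime}$ of $\rho^\prime$ in the same space satisfying $\abs{\inn{\psi}{\psi^\prime}} = F(\rho, \rho^\prime)$. Combined with the previous step, this yields a pair of purifications with overlap $\abs{\inn{\psi}{\psi^\prime}} > 1 - \epsilon$.

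The last step is the exact formula for the trace distance between two pure states, $\tfrac{1}{2}\Vert \kb{\psi}{\psi} - \kb{\psi^\prime}{\psi^\prime}\Vert_1 = \sqrt{1 - \abs{\inn{\psi}{\psi^\prime}}^2}$. I would then factor $1 - \abs{\inn{\psi}{\psi^\prime}}^2 = (1 - \abs{\inn{\psi}{\psi^\prime}})(1 + \abs{\inn{\psi}{\psi^\prime}}) \leq 2\,(1 - \abs{\inn{\psi}{\psi^\prime}}) < 2\epsilon$, using $\abs{\inn{\psi}{\psi^\prime}} \leq 1$ for the factor of $2$ and the fidelity bound for the final inequality. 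Taking square roots gives $\tfrac{1}{2}\Vert \kb{\psi}{\psi} - \kb{\psi^\prime}{\psi^\prime}\Vert_1 < \sqrt{2\epsilon}$, which is precisely the claimed $\ket{\psi}\bra{\psi} \approx_{\sqrt{2\epsilon}} \ket{\psi^\prime}\bra{\psi^\prime}$.

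I do not expect a genuine obstacle here; the result is essentially a packaging of standard facts, and the only points demanding care are bookkeeping ones. One must fix the fidelity convention consistently as the square-root version (not the squared version), so that the linear Fuchs--van de Graaf estimate applies directly, and one should confirm that the purifying register $\mathcal{X}$ is large enough to support Uhlmann's theorem, which it is since every state on $\mathcal{X}$ admits a purification in $\mathcal{X} \otimes \mathcal{X}$.
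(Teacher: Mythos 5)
Your proof is correct and follows essentially the same route as the paper's: the Fuchs--van de Graaf bound to convert the trace-distance hypothesis into $F(\rho,\rho^\prime) > 1-\epsilon$, Uhlmann's theorem to obtain purifications in $\mathcal{X}\otimes\mathcal{X}$ with overlap equal to the fidelity, and the exact pure-state formula $\tfrac{1}{2}\Vert \kb{\psi}{\psi} - \kb{\psi^\prime}{\psi^\prime}\Vert_1 = \sqrt{1-\abs{\inn{\psi}{\psi^\prime}}^2}$ to conclude. Your explicit factoring $1-\abs{\inn{\psi}{\psi^\prime}}^2 \leq 2(1-\abs{\inn{\psi}{\psi^\prime}})$ just spells out the final estimate the paper leaves implicit.
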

\begin{proof}
	Using the Fuchs-van de Graaf inequality we have that
	\begin{align*}
		F(\rho, \rho^\prime) \geq 1-\frac{1}{2} \Vert \rho - \rho^\prime \Vert_1 > 1- \epsilon.
	\end{align*}
	Further using Uhlmann's theorem we have that there exist purifications $\ket{\psi}$ and $\ket{\psi^\prime}$ in $\mathcal{X} \otimes \mathcal{X}$ of $\rho$ and $\rho^\prime$ such that
	\begin{align*}
		F(\rho, \rho^\prime) = \vert \braket{\psi | \psi^\prime} \vert, 
	\end{align*}
	which implies that 
	\begin{align*}
		\vert \braket{\psi | \psi^\prime}  \vert > 1 - \epsilon.
	\end{align*}
	Now using the fact (\cite[Eq. 1.186]{Wat18}) that 
	\begin{align*}
		\frac{1}{2} \Vert \ket{\psi} \bra{\psi} - \ket{\psi^\prime} \bra{\psi^\prime} \Vert_1 = \sqrt{1 - \vert \braket{\psi | \psi^\prime} \vert^2 }
	\end{align*}
	we have
	\begin{align*}
		\frac{1}{2} \Vert \ket{\psi} \bra{\psi} - \ket{\psi^\prime} \bra{\psi^\prime} \Vert_1 < \sqrt{2 \epsilon}.
	\end{align*}
\end{proof}

In the following Lemma, we consider the action of an honest-but-curious node in the erasable bit commitment protocol in a general fashion. We suppose that Alice sends the $T$ part of the state $\rho_{RT}$ to the party $T$. The $R$ part of the state may be arbitrarily distributed between the rest of parties. For example, in the real protocol the register $R = (A, \bar{T})$, where $\bar{T}$ represents all the trusted nodes other than the node $T$. Then, we consider the cases of 'Open' and 'Erase' and show that the state held by $T$ after these are almost decoupled from the state that Alice gave him. In particular, we assume that after the announcement of 'Open' (or 'Erase'), $T$ applies the map $\Phi^O_{T \rightarrow BT^\prime}$ (or $\Phi^E_{T \rightarrow AT^\prime}$) to his part of $\rho_{RT}$ and sends the $B$ (or $A$) share of the state to Bob (or Alice). Similarly, the rest of the parties apply an arbitrary channel, $\Phi_{R}$ to the register $R$. This channel can be arbitrary since the behaviour of $T$ should appear honest irrespective of the strategies of the other parties by definition. We then show that 
\begin{align*}
	(\Phi_{R} \otimes \Phi^O_{T \rightarrow BT^\prime}) (\rho_{RT}) \approx_{O(\sqrt{\delta})} (\Phi_{R} \otimes I_{T \rightarrow B}) (\rho_{RT}) \otimes \tilde{\sigma}_{T^\prime} \\
	(\Phi_{R} \otimes \Phi^E_{T \rightarrow AT^\prime}) (\rho_{RT}) \approx_{O(\sqrt{\delta})} (\Phi_{R} \otimes I_{T \rightarrow A}) (\rho_{RT}) \otimes \tilde{\omega}_{T^\prime}
\end{align*}
where $\tilde{\sigma}_{T^\prime}$ and $\tilde{\omega}_{T^\prime}$ are fixed states. 

\begin{lemma}
	Let $T$ be a $\delta-$honest-but-curious node in the erasable bit commitment protocol. Suppose that during the commit stage of the protocol, Alice sends the $T$ substate of the state $\rho_{RT}$ to the party $T$. The $R$ part of the state is distributed arbitrarily between the parties other than $T$. Further, suppose that in the case of an 'Open', $T$ applies the channel $\Phi^O_{T \rightarrow BT^\prime}$ to his state $T$. Then, there exists a state $\tilde{\sigma}_{T^\prime}$ such that for every state $\rho_{RT}$ distributed by Alice during the commit stage and every channel $\Phi_R$ applied on $R$ during the open phase, the state $\eta_{RBT^\prime} := (\Phi_{R} \otimes \Phi^O_{T \rightarrow BT^\prime}) (\rho_{RT}) $ (where the $B$ part is given to Bob by $T$ and the $T^\prime$ part is held by $T$) produced during the open phase is such that 
	\begin{align*}
		\eta_{RBT^\prime} \approx_{5\sqrt{2 \delta}} ( \Phi_{R} \otimes I_{T \rightarrow B} ) (\rho_{RT}) \otimes \tilde{\sigma}_{T^\prime}.
	\end{align*}
 Similarly, suppose that in the case of an 'Erase', $T$ applies the channel $\Phi^E_{T \rightarrow AT^\prime}$ to his state $T$. Then, there exists a state $\tilde{\omega}_{T^\prime}$ such that for every state $\rho_{RT}$ distributed by Alice during the commit stage and every channel $\Phi_R$ applied on $R$ during the erase phase, the state $\zeta_{RAT^\prime} := (\Phi_{R} \otimes \Phi^E_{T \rightarrow AT^\prime}) (\rho_{RT}) $ (where the $A$ part is given back to Alice by $T$ and the $T^\prime$ part is held by $T$) produced during the erase phase is such that 
	\begin{align*}
		\zeta_{RAT^\prime} \approx_{5\sqrt{2 \delta}}  ( \Phi_{R} \otimes I_{T \rightarrow A} ) (\rho_{RT}) \otimes \tilde{\omega}_{T^\prime}.
	\end{align*}
	\label{lemm:HbCDecouple}
\end{lemma}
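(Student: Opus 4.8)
The plan is to prove both displayed approximations at once, since the erase case is identical to the open case after replacing Bob's register $B$ by Alice's register $A$ and $\Phi^O$ by $\Phi^E$; so I focus on the open case and work throughout with purifications, invoking Lemma~\ref{lemm:UhlmannsTh}. First I would purify everything: let $\ket{\Omega}_{RT\hat R}$ be a purification of $\rho_{RT}$ with a fresh reference $\hat R$, and dilate the arbitrary channel $\Phi_R$ and the node's channel $\Phi^O_{T\to BT'}$ to isometries $V_R : R\to RR_E$ and $W : T\to BT'T_E$. The honest action at the open step is the forwarding $I_{T\to B}$, so the honest global state is the pure state $\ket{\eta^h}_{RR_EB\hat R}=V_R\ket{\Omega}$ (with $T$ relabelled $B$), while the real state is the pure state $\ket{\eta}_{RR_EBT'T_E\hat R}=V_RW\ket{\Omega}$. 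Writing $X:=RR_EB\hat R$ and $Y:=T'T_E$, the goal becomes to show $\ket{\eta}_{XY}\approx\ket{\eta^h}_X\otimes\ket{\sigma}_Y$ for a \emph{fixed} pure state $\ket{\sigma}$, and then to trace out $R_E,T_E,\hat R$.

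The honest-but-curious hypothesis enters only through a marginal consequence. By the local-behaviour clause applied at the open step, with $\hat R$ playing the role of the definition's reference, there is a recovery channel $\mathcal T_{T'}$ with $(\mathcal T_{T'}\otimes I_X)\,\eta_{XT'}\approx_\delta\eta^h_X\otimes\tau$; tracing out $\mathcal T$'s output leaves $\eta_X\approx_\delta\eta^h_X$. Since $\eta^h_X=\ket{\eta^h}\bra{\eta^h}$ is pure, Fuchs--van de Graaf gives $\bra{\eta^h}\eta_X\ket{\eta^h}\ge(1-\delta)^2\ge 1-2\delta$, so $\ket{\zeta}_Y:=\bra{\eta^h}_X\ket{\eta}_{XY}$ has norm at least $\sqrt{1-2\delta}$. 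The unit vector $\ket{\eta^h}_X\otimes\ket{\zeta}_Y/\|\zeta\|$ then has overlap at least $\sqrt{1-2\delta}$ with $\ket{\eta}$, so the pure-state trace-distance identity from the proof of Lemma~\ref{lemm:UhlmannsTh} yields $\ket{\eta}_{XY}\approx_{\sqrt{2\delta}}\ket{\eta^h}_X\otimes\ket{\hat\zeta}_Y$ with $\ket{\hat\zeta}=\ket{\zeta}/\|\zeta\|$. This already gives the product form after tracing out; the remaining work is to show $\ket{\hat\zeta}$ can be chosen independently of the input.

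To fix the state I would expand in the Schmidt basis $\ket{\Omega}=\sum_i\sqrt{p_i}\ket{i}_T\ket{i}_{\mathcal R}$ with $\mathcal R=R\hat R$, which gives exactly $\ket{\zeta}_Y=\sum_i p_i\,\bra{i}_B W\ket{i}_T$, so it suffices to show the vectors $\ket{\zeta_\psi}:=\bra{\psi}_B W\ket{\psi}_T$ are all close to one common $\ket{\sigma}_Y$. Running the decoupling above on a pure single-system input $\ket{\psi}_T$ gives $W\ket{\psi}_T\approx_{\sqrt{2\delta}}\ket{\psi}_B\otimes\ket{\zeta_\psi}_Y$; feeding in an equal superposition of two orthogonal vectors $\ket{\psi},\ket{\psi'}$ and projecting the (again decoupled) output onto $\ket{\psi}_B$ and $\ket{\psi'}_B$ forces $\ket{\zeta_\psi}\approx\ket{\zeta_{\psi'}}$, so all $\ket{\zeta_\psi}$ lie within $O(\sqrt\delta)$ of a single $\ket{\sigma}_Y$. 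Because $\ket{\zeta}=\sum_i p_i\ket{\zeta_i}$ is a convex combination, this transfers with no dimension penalty to $\ket{\hat\zeta}\approx\ket{\sigma}$ for \emph{every} input. Setting $\tilde\sigma_{T'}:=\mathrm{Tr}_{T_E}\ket{\sigma}\bra{\sigma}$ and tracing out $R_E,T_E,\hat R$ gives $\eta_{RBT'}\approx\eta^h_{RB}\otimes\tilde\sigma_{T'}$; and since $V_R$ acts only on $X$ and commutes with $\ket{\sigma}_Y$, applying $V_R$ preserves trace distance and leaves $\tilde\sigma_{T'}$ unchanged, so it is independent of $\Phi_R$ as well. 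Collecting the $\sqrt{2\delta}$ from the decoupling with the $O(\sqrt\delta)$ contributions from the fixed-state step and bookkeeping the constants lands one at the claimed $5\sqrt{2\delta}$.

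I expect the genuinely delicate part to be the fixed-state argument, namely proving that $\tilde\sigma_{T'}$ can be chosen once and for all, independently of $\rho_{RT}$, while keeping the error a dimension-free constant times $\sqrt\delta$: a naive reduction to the Choi state would introduce a $\|N\|\le\sqrt{\dim T}$ blow-up, which is why I route through single-vector inputs, the two-vector superposition step, and the convexity of $\ket{\zeta}=\sum_i p_i\ket{\zeta_i}$. The decoupling itself is a routine consequence of Uhlmann once one observes that the honest-but-curious hypothesis forces the marginal on everything except the node's retained register to be almost pure; the care is all in the superposition/linearity step and in tracking the constants to reach $5\sqrt{2\delta}$.
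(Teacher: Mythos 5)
Your proposal is correct and shares the skeleton of the paper's proof --- (i) the honest-but-curious hypothesis plus purity of the honest state gives, via an Uhlmann-type step, a product approximation with \emph{some} residual state, and (ii) a superposition argument pins that residual to a fixed state --- but you implement both stages genuinely differently. For (i), the paper first reduces to pure inputs by convexity and to isometric $\Phi_R$ by Stinespring, then invokes its Lemma~\ref{lemm:UhlmannsTh} together with the fact that a purification can be mapped onto any extension by a channel on the purifying register; you instead purify and dilate everything globally so that the real state is itself pure, and extract the residual explicitly as $\ket{\zeta}_Y=\bra{\eta^h}_X\ket{\eta}_{XY}$, which bypasses both the convexity reduction and the extension-versus-purification step and is cleaner. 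For (ii), the paper superposes the arbitrary input against a single fixed reference input $\ket{\psi^{(1)}_{RT}}$ using an external flag register $R'$, measures the flag, and triangulates to reach $2\sqrt{2\delta}+3\sqrt{2\delta}=5\sqrt{2\delta}$; you instead derive the Schmidt-basis formula $\ket{\zeta}=\sum_i p_i \bra{i}_B W\ket{i}_T$, prove pairwise closeness of the single-vector residuals by superposing basis vectors inside $T$, and lift to arbitrary entangled inputs by linearity and convexity. Your route buys an explicit formula for the residual and a transparent reason for the absence of any dimension penalty; the paper's flag trick buys a one-shot treatment of arbitrary entangled inputs without Schmidt expansions, cross-term estimates, or phase bookkeeping. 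On the latter, your canonical-phase choice $\ket{\zeta_\psi}:=\bra{\psi}_B W\ket{\psi}_T$ does resolve the phase-alignment worry you flag, since the projection bounds force closeness as vectors rather than merely up to phase. Two small caveats: your claim that \emph{all} $\ket{\zeta_\psi}$ are close to one common state covers only orthogonal pairs as sketched --- for non-orthogonal $\psi,\chi$ you must additionally write $\chi=a\psi+b\chi'$ with $\chi'\perp\psi$ and use linearity together with the cross-term bound $\Vert\bra{\psi}_B W\ket{\chi'}\Vert\le\sqrt{2\delta}$, which your toolkit handles but your sketch glosses over; and your constant bookkeeping is not actually carried out --- tracking your chain (each projection onto an amplitude-$1/\sqrt{2}$ branch inflates the error by $\sqrt{2}$) appears to land slightly above $5\sqrt{2\delta}$, though any $O(\sqrt{\delta})$ bound serves the paper's applications of the lemma equally well.
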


\begin{proof}
	The proof of this Lemma follows the same argument as the proof of the Rushing Lemma in (Lemma 6.2 in \cite{DNS10}). We will prove this Lemma only for the case of an 'Open'. The case of an 'Erase' follows similarly. Note that due to convexity of the trace norm, it suffices to prove this Lemma for pure states. Further, it is also sufficient to consider isometric channels $\Phi_R$ for an arbitrary register $R$, since one can use the Stinespring representation to write a general channel as an isometric channel composed with the partial trace (see for example \cite[Sec. 2.2.2]{Wat18}) and the trace distance decreases under a partial trace operation.  \\
	
	Suppose that after the announcement of 'Open', the trusted node $T$ applies the map $\Phi^O_{T \rightarrow B T^\prime}$ and the isometric channel $\Phi_R$ is applied to the register $R$ by the other parties. Choose and fix a state $\rho^{(1)}_{RT}:= \ket{\psi^{(1)}_{RT}} \bra{\psi^{(1)}_{RT}}$. We then define the state
	\begin{align*}
		\eta^{(1)}_{RBT^\prime} = (\Phi_R \otimes \Phi^O_{T \rightarrow B T^\prime} )(\rho^{(1)}_{RT}).	
	\end{align*}
	Since $T$ is $\delta$-honest-but-curious in the case of an open, there exists a channel $\mathcal{T}_{T^\prime}$ such that 
	\begin{align*}
		(I_{RB} \otimes \mathcal{T}_{T^\prime}) (\eta^{(1)}_{RBT^\prime})  & \approx_{\delta} (\Phi_R \otimes I_{T \rightarrow B})(\rho^{(1)}_{RT})\\
		\Rightarrow \ \eta^{(1)}_{RB} & \approx_{\delta} (\Phi_R \otimes I_{T \rightarrow B})(\rho^{(1)}_{RT}).
	\end{align*}
	Now, we use Lemma \ref{lemm:UhlmannsTh} and the fact that $(\Phi_R \otimes I_{T \rightarrow B})(\rho^{(1)}_{RT})$ is a pure state to show that there exists a state ${\sigma}^\prime_{T^{\prime\prime}}$ and a purification $\eta^{(1)}_{RBT^{\prime\prime}}$ of $\eta^{(1)}_{RB} $ such that 
	\begin{align*}
		\eta^{(1)}_{RBT^{\prime\prime}} \approx_{\sqrt{2\delta}} (\Phi_R \otimes I_{T \rightarrow B})(\rho^{(1)}_{RT}) \otimes {\sigma^\prime}_{T^{\prime\prime}}.
	\end{align*}
	Further, since $\eta^{(1)}_{RBT^{\prime\prime}}$ is a purification of $\eta^{(1)}_{RB}$ and has the same partial state as $\eta^{(1)}_{RBT^{\prime}}$ on registers R and B, there exists a quantum channel $\Omega_{T^{\prime\prime} \rightarrow T^\prime}$ acting only on $T^{\prime\prime}$ transforming $\eta^{(1)}_{RBT^{\prime\prime}}$ into $\eta^{(1)}_{RBT^{\prime}}$ (\cite[Proposition 2.29]{Wat18}). Thus, we have
	\begin{align}
		\eta^{(1)}_{RBT^{\prime}} \approx_{\sqrt{2\delta}} (\Phi_R \otimes I_{T \rightarrow B})(\rho^{(1)}_{RT}) \otimes \tilde{\sigma}_{T^{\prime}}
		\label{eq:PfSigmaTilde}
	\end{align}
	where $\tilde{\sigma}_{T^{\prime}} = \Omega_{T^{\prime\prime} \rightarrow T^\prime} ({\sigma^\prime}_{T^{\prime\prime}})$.\\
	
	We claim that this state $\tilde{\sigma}_{T^{\prime}}$ is the required state. Consider the state $\rho = \ket{\psi_{RT}} \bra{\psi_{RT}}$. Now, let $\ket{\psi^\prime_{R^\prime R T}} := 1/\sqrt{2} (\ket{1_{R^\prime}} \ket{\psi_{RT}} + \ket{2_{R^\prime}}\ket{\psi^{(1)}_{RT}})$ where $\ket{1_{R^\prime}}$ and $\ket{2_{R^\prime}}$ are orthonormal states. Define the states
	\begin{align*}
		\eta_{RBT^\prime} & := (\Phi_R \otimes \Phi^O_{T \rightarrow BT^\prime})(\ket{\psi_{RT}} \bra{\psi_{RT}}) ) \\
		\eta^\prime_{R^\prime R BT^\prime} & := (I_{R^\prime} \otimes \Phi_R \otimes \Phi^O_{T \rightarrow BT^\prime})(\ket{\psi^\prime_{R^\prime R T}} \bra{\psi^\prime_{R^\prime R T}}) ) 	.
	\end{align*} 
	Since the node is $\delta$-honest-but-curious irrespective of the strategy of the other parties, we once again have
	\begin{align*}
		\eta^\prime_{R^\prime R B} \approx_\delta (\Phi_R \otimes I_{T \rightarrow B}) (\ket{\psi^\prime_{R^\prime R T}} \bra{\psi^\prime_{R^\prime R T}}) )
	\end{align*}
	where we have omitted the identity map on $R^\prime$ for brevity. Arguing as before, we see that there exists a state $\sigma^{\prime \prime}_{T^\prime}$ such that 
	\begin{align*}
		\eta^\prime_{R^\prime R B T^\prime} \approx_{\sqrt{2\delta}} (\Phi_R \otimes I_{T \rightarrow B}) (\ket{\psi^\prime_{R^\prime R T}} \bra{\psi^\prime_{R^\prime R T}}) \otimes \sigma^{\prime \prime}_{T^\prime}.
	\end{align*}
	If we now apply the channel of the measurement in the $\{ \ket{1_{R^\prime}}\bra{1_{R^\prime}}, \ket{2_{R^\prime}}\bra{2_{R^\prime}} \}$ to both the sides of the above equation we get
	\begin{align}
		\eta_{RBT^\prime} & \approx_{2 \sqrt{2\delta} } (\Phi_R \otimes I_{(T \rightarrow B)})(\ket{\psi_{RT}} \bra{\psi_{RT}}) ) \otimes \sigma^{\prime \prime}_{T^\prime} \label{eq:PfSigmaArb}\\
		\eta^{(1)}_{RBT^\prime} & \approx_{2 \sqrt{2\delta} } (\Phi_R \otimes I_{(T \rightarrow B)})(\ket{\psi^{(1)}_{RT}} \bra{\psi^{(1)}_{RT}}) ) \otimes \sigma^{\prime \prime}_{T^\prime}  \label{eq:PfSigmaT1Link}
	\end{align}
	Combining Eq. \ref{eq:PfSigmaTilde} and Eq. \ref{eq:PfSigmaT1Link}, we get 
	\begin{align*}
		\sigma^{\prime \prime}_{T^\prime}  \approx_{3 \sqrt{2\delta}} \tilde{\sigma}_{T^{\prime}}.
	\end{align*}
	Using this we can show that 
	\begin{align*}
		\eta_{RBT^\prime} & \approx_{5 \sqrt{2\delta} } (\Phi_R \otimes I_{(T \rightarrow B)})(\ket{\psi_{RT}} \bra{\psi_{RT}}) ) \otimes \tilde{\sigma}_{T^{\prime}}
	\end{align*}
	which proves our claim since $\ket{\psi_{RT}}$ was arbitrary.
\end{proof}

In the next Lemma, we consider the action of all the honest temporarily trusted nodes together during the protocol. This Lemma will be useful while proving the various security definitions. Let $E$ be the set of adversarial trusted nodes $(\TT_i)$. We assume that at least one of Alice and Bob is honest. This assumption is satisfied by the hypotheses of our security requirements. Under this assumption, during the commit stage Alice distributes the state $\rho_{A^\prime B_C T_{\bar{E}}T_{E}}$, where the register $B_C =(\Theta, Z, R)$ and is held by Bob. Apart from this restriction, we do not place any other restrictions on the behaviour of these parties. Alice, Bob and the adversarial trusted nodes $\TT_E$ follow an arbitrary (possibly dishonest) strategy during the protocol and the trusted nodes $\TT_{\bar{E}}$ act honest-but-curiously. Under these conditions, we show that the trusted nodes act almost honestly upto storing some fixed state in their memories.

\begin{lemma}
	Consider the erasable bit commitment protocol. Let $E$ be the set of adversarial trusted nodes $(\TT_i)$. We assume that one of the parties Alice or Bob is honest. Suppose that during the commit stage, (possibly dishonest) Alice prepares and distributes the state $\rho_{A^\prime B_C T_{\bar{E}}T_{E}}$, where the register $B_C = (\Theta, Z, R)$ and is held by Bob. Let $\Phi^O_{A B_C T_{E}}$ be the channel applied by Alice, Bob and the adversarial nodes in case of an 'Open'. Similarly, let $\Phi^E_{AB_C T_{E}}$ be the channel applied by Alice, Bob and the adversarial nodes in case of an 'Erase'. Then, in case of an 'Open', the state received by Bob $\eta_{A T^\prime_{\bar{E}} T^\prime_{E} B_C B}$ (where $T^\prime_{\bar{E}}$ and $T^\prime_{E}$ are registers held by $\TT_{\bar{E}}$ and $\TT_{E}$ after an 'Open') satisfies
	\begin{align*}
		\eta_{A^\prime T^\prime_{\bar{E}} T^\prime_{E} B_C B} \approx_{m \sqrt{2 \delta}} (I_{T_{\bar{E}} \rightarrow B_{\bar{E}}} \otimes \Phi^O_{AB_C T_{E}}) (\rho_{A^\prime B_C T_{\bar{E}} T_{E}}) \otimes \tilde{\sigma}_{T_{\bar{E}}}.
	\end{align*}
	for some fixed state $\tilde{\sigma}_{T_{\bar{E}}}$. In the case of an 'Erase', the state $\zeta_{A^\prime T^\prime_{\bar{E}} T^\prime_{E} B_C}$ received back by Alice satisfies (where $T^\prime_{\bar{E}}$ and $T^\prime_{E}$ are registers held by $\TT_{\bar{E}}$ and $\TT_{E}$ after an 'Erase')
	\begin{align*}
		\zeta_{A^\prime T^\prime_{\bar{E}} T^\prime_{E} B_C} \approx_{m \sqrt{2 \delta}} (I_{T_{\bar{E}} \rightarrow A_{\bar{E}}} \otimes \Phi^E_{AB_C T_{E}}) (\rho_{A^\prime B_C T_{\bar{E}} T_{E}}) \otimes \tilde{\omega}_{T_{\bar{E}}}.
	\end{align*}
	for some fixed state $\tilde{\omega}_{T_{\bar{E}}}$.
	\label{lemm:CollectiveHbC}
\end{lemma}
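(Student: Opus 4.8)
The plan is to reduce the collective statement to the single-node result already established in Lemma \ref{lemm:HbCDecouple} by a hybrid (telescoping) argument over the honest-but-curious nodes. The point of Lemma \ref{lemm:HbCDecouple} is that it holds against an \emph{arbitrary} channel $\Phi_R$ applied to the register $R$ held by everyone other than the single node under consideration, and that the decoupled memory state it produces is a \emph{fixed} state, independent of both the input $\rho_{RT}$ and of $\Phi_R$. Both of these features are exactly what make the peeling-off argument go through. I will carry out the proof for the $\textrm{Open}$ case; the $\textrm{Erase}$ case is symmetric, just as in Lemma \ref{lemm:HbCDecouple}, with the forwarding map $I_{T \rightarrow B}$ replaced by $I_{T \rightarrow A}$ and $\tilde{\sigma}$ replaced by $\tilde{\omega}$.

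Concretely, first I would enumerate the honest-but-curious nodes as $\bar{E} = \{ j_1, \ldots, j_s \}$ with $s = |\bar{E}| \leq m$, and define a chain of hybrid states $\eta^{(0)}, \eta^{(1)}, \ldots, \eta^{(s)}$, where $\eta^{(0)}$ is the true output state of the real protocol (all nodes in $\bar{E}$ applying their genuine open-phase channels $\Phi^O_{T_{j_i} \rightarrow B T'_{j_i}}$), and $\eta^{(i)}$ is obtained from $\eta^{(i-1)}$ by replacing the channel of node $j_i$ with the honest forwarding map $I_{T_{j_i} \rightarrow B}$ tensored with the fixed memory state $\tilde{\sigma}_{T'_{j_i}}$ supplied by Lemma \ref{lemm:HbCDecouple}. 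The final hybrid $\eta^{(s)}$ is then exactly $(I_{T_{\bar{E}} \rightarrow B_{\bar{E}}} \otimes \Phi^O_{A B_C T_E})(\rho_{A' B_C T_{\bar{E}} T_E}) \otimes \tilde{\sigma}_{T_{\bar{E}}}$ with $\tilde{\sigma}_{T_{\bar{E}}} := \bigotimes_{i=1}^{s} \tilde{\sigma}_{T'_{j_i}}$, which is a fixed state because each factor is fixed. To bound the single step $\eta^{(i-1)} \approx \eta^{(i)}$, I would invoke Lemma \ref{lemm:HbCDecouple} with $T := T_{j_i}$ and with $R$ taken to be the union of all remaining registers (Alice's $A'$, Bob's $B_C$, the adversarial nodes $T_E$, and the other honest-but-curious nodes $T_{\bar{E} \setminus \{ j_i \}}$); the channel $\Phi_R$ is the composite of the dishonest parties' open map $\Phi^O_{A B_C T_E}$ together with the channels (genuine for $i' > i$, already-replaced for $i' < i$) of the remaining honest-but-curious nodes. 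This is a legitimate channel on $R$, so the hypothesis that $j_i$ is $\delta$-honest-but-curious irrespective of the others' strategy applies, and the step costs at most $5\sqrt{2\delta}$ in trace distance. The triangle inequality over the $s \leq m$ steps then yields the claim.

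The main obstacle, and the point I would be most careful about, is the legitimacy of folding the \emph{other} honest-but-curious nodes into the arbitrary adversarial channel $\Phi_R$ at each hybrid step. This is sound precisely because the $\delta$-honest-but-curious guarantee in the definition quantifies over \emph{every} strategy of the remaining parties, so it does not matter that some of those ``remaining parties'' are themselves honest-but-curious nodes whose channels have (or have not yet) been replaced by forward-and-store maps; from the vantage point of node $j_i$ they are simply part of an arbitrary $\Phi_R$. A secondary point to verify is that the fixed state $\tilde{\sigma}_{T'_{j_i}}$ obtained when peeling off node $j_i$ genuinely does not depend on how the earlier nodes were modified, which is immediate from Lemma \ref{lemm:HbCDecouple} since $\tilde{\sigma}_{T'}$ there is the same state for all inputs and all $\Phi_R$; this is what lets the fixed factors collect into the single product state $\tilde{\sigma}_{T_{\bar{E}}}$.

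A last, purely bookkeeping remark: the per-node cost $5\sqrt{2\delta}$ accumulated over at most $m$ nodes gives a bound of the form $O(m\sqrt{\delta})$, so the stated constant $m\sqrt{2\delta}$ should be read up to absorbing the factor from the single-node lemma (and it is in any case immaterial, since the main body works in the perfectly honest-but-curious regime $\delta = 0$, for which every hybrid step is exact and the whole chain collapses to an equality).
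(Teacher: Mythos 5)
Your proposal is correct and follows essentially the same route as the paper: the paper's proof likewise peels off the honest-but-curious nodes one at a time via Lemma \ref{lemm:HbCDecouple}, folding the channels of all remaining parties (dishonest and not-yet-replaced honest-but-curious alike) into the arbitrary $\Phi_R$, and accumulates $5\sqrt{2\delta}$ per node by the triangle inequality, collecting the fixed memory states into $\tilde{\sigma}_{T_{\bar{E}}} = \bigotimes_{i \in \bar{E}} \tilde{\sigma}_{T'_i}$. Your bookkeeping remark is also on point: the paper's own proof ends with $5m\sqrt{2\delta}$ rather than the $m\sqrt{2\delta}$ stated in the lemma, so the stated constant must indeed be read up to that factor (and is immaterial in the $\delta = 0$ regime used in the main body).
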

\begin{proof} Once again, we will prove the statement for the case of an 'Open'. The proof for an 'Erase' follows similarly. Suppose, that in case of an 'Open', each honest-but-curious node $T_i$ for $i \in \bar{E}$ applies the channel $\Phi_{T_i} := \Phi^O_{T_i \rightarrow B_i T^\prime_{i}}$ and the other parties apply the channel  $\Phi^O_{AB_C T_E}$ collectively to their states. Thus, 
\begin{align*}
	\eta_{A^\prime T^\prime_{\bar{E}} T^\prime_{E} B_C B} = \left( \left(  \bigotimes_{i \in \bar{E}} \Phi_{T_i} \right) \otimes \Phi^O_{AB_C T_{E}} \right) (\rho_{A^\prime B_C T_{\bar{E}} T_{E}})
\end{align*} 
Now using Lemma \ref{lemm:HbCDecouple}, we know that there exist fixed states $\tilde{\sigma}_{T_{i}}$ for $i \in \bar{E}$ for which  
\begin{align*}
	 \eta_{A^\prime T^\prime_{\bar{E}} T^\prime_{E} \Theta B}   & \approx_{5 \sqrt{2\delta}} \left( I_{T_{\bar{E}_1} \rightarrow B_{\bar{E}_1} } \otimes \left(  \bigotimes_{i \in \bar{E}; i \neq \bar{E}_1} \Phi_{T_i} \right) \otimes \Phi^O_{AB_C T_{E}} \right) (\rho_{A^\prime B_C T_{\bar{E}} T_{E}}) \otimes \tilde{\sigma}_{T_ {\bar{E}_1}} \\
	& \approx_{5 |\bar{E}| \sqrt{2\delta}} \left( I_{T_{\bar{E}} \rightarrow B_{\bar{E}} } \otimes \Phi^O_{AB_C T_{E}} \right)  (\rho_{A^\prime B_C T_{\bar{E}} T_{E}})\otimes  \bigotimes_{i \in \bar{E}} \tilde{\sigma}_{T^\prime_i} \\
	& \approx_{5 m \sqrt{2\delta}} \left( I_{T_{\bar{E}} \rightarrow B_{\bar{E}} } \otimes \Phi^O_{AB_C T_{E}} \right) (\rho_{A^\prime B_C T_{\bar{E}} T_{E}}) \otimes  \bigotimes_{i \in \bar{E}} \tilde{\sigma}_{T^\prime_i}
\end{align*}
where the second step is the result of repeating the first step multiple times, and in the third step we use the fact that $|\bar{E}| \leq m$. Hence, the Lemma is true for $\tilde{\sigma}_{T_{\bar{E}}} := \otimes_{i \in \bar{E}} \tilde{\sigma}_{T^\prime_i}$
\end{proof}
\subsubsection{Correctness for honest Alice and Bob}
\label{subsec:CorrHonAB}
\textbf{Correctness for Open: } 

\begin{proof} If Alice is honest, then she generates the state
\begin{align*}
	\rho_{X \Theta Z T} := \sum_{x,\theta, z} 2^{-(2n+k)} \ket{x} \bra{x} \otimes \ket{\theta } \bra{\theta} \otimes \ket{z} \bra{z} \otimes H^{\theta}\ket{\Enc(x)\oplus z} \bra{\Enc(x) \oplus z} H^{\theta}
\end{align*}
and sends $\theta, z$ to Bob, and the registers $T = (T_i)$ to the trusted nodes. Let $E$ be the set of adversaries. Suppose, that in case of an open each honest-but-curious node $T_i$ for $i \in \bar{E}$ applies the channel $\Phi_{T_i} := \Phi_{T_i \rightarrow B_i T^\prime_{i}}$ and the adversarial nodes apply the channel  $\Phi_{T_E} := \Phi_{T_E \rightarrow B_E T^\prime_E}$ collectively to their states. Let $\eta_{X \Theta B T'_{\bar{E}} T'_E}$ be the real state of the protocol after the trusted nodes forwards their shares to Bob, that is
\begin{align*}
	\eta_{X \Theta Z B T'_{\bar{E}} T'_E} = \left(\left(  \bigotimes_{i \in \bar{E}} \Phi_{T_i} \right) \otimes \Phi_{T_E} \right) (\rho_{X \Theta Z T})
\end{align*}
Further, we suppose that Alice and Bob shared a random $r \in \mathcal{R}$ for privacy amplification during the commit phase. By Lemma \ref{lemm:CollectiveHbC} we know that 
\begin{align}
	\eta_{X \Theta Z B T'_{\bar{E}} T'_E}  \approx_{5 m \sqrt{2\delta}} \left( I_{T_{\bar{E}} \rightarrow B_{\bar{E}} } \otimes \Phi_{T_E} \right) (\rho_{X \Theta Z T}) \otimes \tilde{\sigma}_{T^\prime_{\bar{E}}}.
	\label{eq:CorrReId}
\end{align}
for some fixed state $\tilde{\sigma}_{T^\prime_{\bar{E}}}$.
For this strategy of the trusted nodes we define the ideal state $\sigma_{CCFT}$ to simply be the state
\begin{align*}
	\sigma_{CCFT} := \sum_{c} 2^{-l} \ket{cc}\bra{cc} \otimes \ket{\suc}\bra{\suc} \otimes \sigma_T
\end{align*}
for $\sigma_T$ defined as 
\begin{align*}
	\sigma_T:= \tilde{\sigma}_{T^\prime_{\bar{E}}} \otimes \eta_{T^\prime_E}
\end{align*}
where $\eta_{T^\prime_E}$ is the partial state held by the nodes $T_E$ at the end of the 'Open' stage in the real protocol. This ideal state satisfies the security requirements given in \ref{def:ebc}. Further, if we apply the decoding map to both sides of Eq. \ref{eq:CorrReId}, we see that Bob is able to recover $\hat{X}= X$ with high probability since $|E| \leq t$ and the channel error rate is $\gamma$. Now observe that
\begin{align*}
	H^{5m\sqrt{2\delta}}_{\min} (X | T^\prime_{{\bar{E}}} T^\prime_{E})_{\eta} \geq (\delta_c+ \delta^\prime)n
\end{align*}
by using Eq. \ref{eq:CorrReId} and an argument similar to the one given in Sec. \ref{sec:hid-open-corr}. By privacy amplification ($\ell = \delta_c n$), we can show that the distance between the ideal state and the real state is at most
\begin{align*}
	\epsilon^\prime = 2^{-\frac{\delta^\prime n}{2}-1} +5m\sqrt{2 \delta}.
\end{align*}
\end{proof}

\textbf{Correctness for Erase: } Following the same arguments as above one can prove the correctness of Erase in our protocol. 

\subsubsection{Security for honest Alice}

The hiding properties of the protocol during the commit stage were proven in Section~\ref{sec:hid-commit}. \\

Further, the proof of security for Alice after an erase follows the argument as the proof for correctness of an Erase. The only difference being that the adversaries in this case have access to $\Theta$ as well. The arguments given in Section \ref{sec:hid-erase} handle this too. 

\subsubsection{Security for honest Bob}
\label{subsubsec:BindingProof}

\begin{proof}
We now formally prove security for honest Bob. We use a simulator argument similar to that used in Ref. \cite{KWW12} (proof of Theorem III.5) to define the ideal state $\sigma_{\tilde{C} A^\prime B T_E^\prime T_{\bar{E}}}$. We show that Alice is committed to $\tilde{C}$ after the commit phase, i.e. she cannot open something different than $\tilde{C}$ except with negligible probability.\\

Suppose that during the commit phase of the real protocol, Alice sends the $T$ part of the state $\rho_{A^\prime T }$ to the trusted nodes and $\theta$ and $r$ to Bob. Also, let $\Phi_{T_i}$ be the channel applied by the honest-but-curious node $i \in \bar{E}$. First, for the honest-but-curious party $i \in \bar{E}$, we let the state $\tilde{\sigma}_{T_i}$ be the state shown to exist in Lemma 2, such that for every initial state $\rho_{AT_{\bar{i}} T_i}$ and every channel $\Phi_{A T_{\bar{i}}}$ applied to the registers held by all the other parties
\begin{align*}
	(\Phi_{AT_{\bar{i}}} \otimes \Phi_{T_i}) (\rho_{AT_{\bar{i}} T_i}) \approx_{5\sqrt{2 \delta}} ( \Phi_{AT_{\bar{i}}} \otimes I_{T_i \rightarrow B_i} ) (\rho_{AT_{\bar{i}} T_i}) \otimes \tilde{\sigma}_{T_i^\prime}.
\end{align*}
Now, we show the existence of the ideal state $\sigma_{\tilde{C} A^\prime B T_E^\prime T_{\bar{E}}}$ algorithmically by making Alice and the $\TT_i$'s interact with a simulator. Imagine a protocol $\Pi_{\text{sim}}$ where instead of sending the quantum states to the trusted parties during the commit stage, Alice sends the shares of the honest-but-curious parties $\rho_{T_{\bar{E}}}$ to a simulator and the share of the adversaries to the adversaries. Alice also shares $\theta$ and $r$ with the simulator. The simulator measures $\rho_{T_{\bar{E}}}$ in the $\theta$ basis to get $\bar{y}_{\bar{E}}$. Define $\bar{y} := (\bar{y}_{\bar{E}}\ 0_E)$ and let $\tilde{y}$ be the codeword closest to $\bar{y}$. Further, let $\tilde{x} := \Dec (\tilde{y})$ and $\tilde{c} := \Ext(\tilde{x} , r)$. Then, it re-encodes $\bar{y}_{\bar{E}}$ in the $\theta$ basis (as $H^{\theta} \ket{\bar{y}_{\bar{E}}}$) and sends it to the honest-but-curious parties $\bar{E}$. Further, in this protocol, we assume that in the case of an 'Open', the honest-but-curious nodes $i \in \bar{E}$, $T_i$ apply the channel $\Phi^{(\text{id})}_{T_i}$ which we define as $\Phi^{(\text{id})}_{T_i} (\rho) = \rho \otimes \tilde{\sigma}_{T_i}$ for every state $\rho$. Finally, Alice and the adversaries $T_{E}$ apply whatever channel they apply during the commit stage in the real protocol. To prepare the ideal state $\sigma_{\tilde{C} A^\prime B T_E^\prime T_{\bar{E}}}$, the parties follow the above protocol. The register $\tilde{C}$ simply contains the string $\tilde{c}$.  \\ 

We now prove that for any strategy of Alice and the adversaries after the commit stage on the ideal state defined above, Bob opens the string $\tilde{c}$ during a successful open. Suppose, Bob measures the state, he receives during the open stage, in the $\theta$ basis and gets the outcome $\hat{y}$. Since, the $T_{\bar{E}}$ do not change or disturb their states during $\Pi_{\text{sim}}$, the Hamming distance ($h$) between $\hat{y}$ and $\bar{y}$ satisfies
\begin{align*}
	h(\hat{y}, \bar{y}) & \leq |E|\frac{n}{m} + \gamma n\\
	& \leq \left( \frac{t}{m} + \gamma \right) n = \frac{d-1}{4}.
\end{align*}
with high probability. Further, we have that $h(\hat{y}, \tilde{y}) \leq h(\hat{y}, \bar{y}) + h (\bar{y}, \tilde{y}) \leq 3 (d-1)/4$. For any codeword $y^\prime$ such that $y^\prime \neq \tilde{y}$, we have that 
\begin{align*}
	h(\hat{y}, y^\prime ) &\geq h (\tilde{y}, y^\prime) - h(\tilde{y}, \hat{y} ) \\
	& \geq d - \frac{3(d-1)}{4} > \frac{d-1}{4}.
\end{align*}
Since, Bob discards the commitment in case the Hamming distance between the measured string and the nearest codeword is more than $(d-1)/4$, the adveraries cannot change the commitment after the commit stage. Hence, the ideal state defined above satisfies the security definition.\\

Now, we will prove that the real state produced during the protocol is close to the ideal one after the open map is applied. Since, the action of the trusted nodes and the simulator in $\Pi_{\text{sim}}$ commutes, this protocol would produce the same state as a protocol where the simulator is between the trusted nodes and Bob. Then, observe that in $\Pi_{\text{sim}}$, since the simulator encodes the string $\bar{y}_{\bar{E}}$ in the $\theta$ basis and Bob measures this in the $\theta$ basis, we can simply remove this re-encoding and measurement step from the modified protocol. The protocol constructed this way, though, is exactly the original protocol, except for the fact that the honest-but-curious nodes instead of applying the maps $\Phi_{T_i}$, apply $\Phi^{(\text{id})}_{T_i}$. Using Lemma \ref{lemm:CollectiveHbC}, we can prove that the state of this protocol is $5m\sqrt{2 \delta}$ close to the real state. 
\end{proof}

\section{Comparison to other definitions for binding}

The binding condition used in the noisy storage model \cite{KWW12} requires the existence of a classical register $\tilde{C}$ extending the real state of the protocol $\rho_{ABT}$ to $\rho_{\tilde{C}ABT}$ such that the probability of successfully opening a commitment $\hat{C} \neq \tilde{C}$ on $\rho_{\tilde{C}ABT}$ is negligible. It has been argued in Ref. \cite{Kaniewski13}, that satisfying such a condition is not possible unless there is a register, inaccessible to the parties, which would prevent the purification of the protocol. As shown in Section \ref{sec:OTImpossible}, this is not the case for bit commitment protocols in the temporarily trusted party setting. Hence, similar to the relativistic setting, we cannot expect bit commitment protocols in the temporarily trusted party setting to satisfy this stronger bit commitment condition. \\

Several other papers \cite{Mayers97, Dumais00, Kent11, Kaniewski13} use a different weak-binding condition. For the statement of this condition, let us define $p_c$ to be the maximum probability of dishonest Alice being able to successfully open the commitment $c$. A bit commitment protocol is said to be $\Delta-$weak-binding according to the definition used in the aforementioned papers if
\begin{align}
	\sum_{c \in \{ 0,1 \}^\ell} p_c \leq 1 + \Delta.
	\label{eq:WeakBinding}
\end{align}
If a general protocol is binding according to our definition, then one can show that it is also $O(2^\ell \epsilon)$-weak binding, where $\epsilon$ is the security parameter in Definition \ref{def:ebc}. To see this note that for every state $\rho_{A'BT_{E}T_{\bar{E}}}$ at the end of the commit phase, there exists an ideal state $\sigma_{\tilde{C} A'BT_{E}T_{\bar{E}}}= \sum_{c } P_{\tilde{C}}(c) \ket{c}\bra{c} \otimes \sigma^{(c)}_{A'BT_{E}T_{\bar{E}}}$, with a classical distribution $P_{\tilde{C}}$ over $\{ 0,1 \}^\ell$. Alice can only open $\hat{C} \neq \tilde{C}$ with probability at most $\epsilon$ on this ideal state. To bound $p_c$ for every commitment $c$, one can choose the optimal open map allowing Alice to open the commitment $c$ with the maximum probability. By the security definition, this map would allow Alice to open $c$ with probability at most $P_{\tilde{C}}(c) + \epsilon$. Hence, the left hand side can be at most $2^\ell \epsilon$ more than 1 for the ideal state. One can similarly argue that the left hand side can be at most $O(2^\ell \epsilon)$ more than 1 for the real state. \\

We can, however, show that our protocol satisfies the weak binding condition mentioned above without the exponential dependence on $\ell$. This follows from the fact that in our protocol $(m-t)$ of the third parties are honest-but-curious and hence they do not interact with Alice, unless required by the protocol. As a result, the quantum channel applied by these parties (or the strategy adopted by them) is independent of $c$, which is chosen by Alice after the commit phase. The distribution over the commitment is determined completely by the shares held by the honest-but-curious parties and hence remains constant irrespective of the map applied by Alice and the corrupt third parties. Thus, for our protocol, the sum in Eq. \ref{eq:WeakBinding} can be bounded by one. \\

The $\Delta-$weak-binding condition presented above is known to be not composable (see e.g., \cite{DFRSS07, Kaniewski13} ). The binding condition used in the noisy storage model, on the other hand, is motivated through considerations for composability. As mentioned earlier, we adopted the binding condition in the noisy storage model for the temporarily trusted setting, so that composability results in the former may be lifted to this setting. 

\section{Decoupling of honest-but-curious nodes during the protocol}

In addition to satisfying the hiding property for Bob's and the adversarial nodes, our protocol satisfies the following hiding condition for the honest-but-curious nodes.

\begin{definition}[Hiding for the honest-but-curious nodes] If Alice is honest, $|E| \leq t$ and $\TT_{\bar{E}}$ are quantum honest-but-curious, then for any joint state $\rho_{C B^\prime T_E^\prime T_{\bar{E}}}$ created by the commit protocol, the honest-but-curious nodes do not learn $C$, or equivalently, they are completely decoupled from $C$:

$$\forall i \in \bar{E}: \rho_{C T_{i}} \approx_{\epsilon} \tau_{\ZO^l } \otimes \rho_{T_i}$$
\end{definition}

The proof that our protocol satisfies this property was given in Sec. \ref{sec:HbCDecoupling}. 

\section{Definitions and proofs for the expungement property}

\subsection{Definition of the expungement property}

\begin{definition}
\label{def:forgettingProp}

An $(\ell, \epsilon)$ randomized erasable string commitment scheme is said to satisfy the \emph{expungement on success} property if it satisfies the following properties:

Open: If both Alice and Bob are honest and $\sfA = \sfB = \suc$  after the open protocol, then there exists an ideal state $\sigma_{CCFT}$ satisfying the following

\begin{enumerate}

\item The distribution of $C$ is uniform and the $\TT_i$'s learn nothing about $c$:

$$\sigma_{CFT} = \tau_{\{ 0, 1 \}^\ell } \otimes \kb{\suc}{\suc} \otimes \sigma_T .$$

\item The joint state $\eta_{C \tilde{C} F T}$ created by the real protocol is $\epsilon$-close to the ideal state

$$ \eta_{C \tilde{C} F T} \approx_\epsilon \sigma_{CCFT} .$$

\end{enumerate}

Erase: If Alice is honest, Bob does not collaborate with the trusted nodes during the protocol and $\sfA =  erase$ after the erase protocol, then the joint state $\zeta_{C  B^{\prime } T }$ created by the erase protocol satisfies the following:

$$\zeta_{C  B^{\prime  } T } \approx_\epsilon \tau_{\{ 0, 1 \}^\ell } \otimes \zeta_{B^{\prime  } T } . $$

\end{definition}

\subsection{Proof of the expungement property}
\label{sec:ForgettingProof}

We only prove min-entropy bounds on the relevant states. These are sufficient to prove the required statements, by taking the commitment output length $\ell$ appropriately according to the privacy amplification theorem. However, depending on the noise parameters it might be necessary to change the some of the protocol parameters to achieve security as above. In order to complete the proof that our robust protocol satisfies the expungement property, we first list the properties of min-entropy that we will use, and then derive the desired min-entropy bound. We denote by $E$ the register of the adversarial trusted node who keeps information about $X$, by $\rho_{TXYZU \theta}$ the state prepared by Alice and by $\zeta_{EXYZU \theta} = \N_{T \rightarrow E} (\psi_{TXYZU \theta})  $ the state after the adversary acted on $T$ and kept register~$E$, after sending back to Alice what should have been the content of $T$ for erasure. We focus on proving the desired bound on $\Hmin^{7 \epsilon} (X|EZ)$ for $\zeta$ in the case of an erase (here $\epsilon$ is a small parameter chosen according to Ref. \cite{TLGR12} to achieve the bound given in Eq. \ref{eq:ucr}). A similar bound on $\Hmin^{7 \epsilon} (X|E)$ for $\eta$ in the case of an open  is proved similarly.

\subsubsection{Preliminaries for min-entropy bound}

We have $x \in_R \{0, 1 \}^k$, $z \in_R \{0, 1 \}^n$, $\theta \in_R \{0, 1 \}^n$ .

Also, $y = \Enc(x)$, $u = y \oplus z$, $y = u \oplus z$ and $\ket{\psi}_T = H^\theta \ket{u}_T$.

Hence, the partial states conditioned on $u$, $\rho_{TZ}^u = \rho_T^u \otimes \zeta_Z^u$ and then $\zeta_{EZ}^u = \N_{T \rightarrow E} (\rho_{TZ}^u) = \N_{T \rightarrow E} (\rho_{T}^u) \otimes \zeta_Z^u = \zeta_E^u \otimes \zeta_Z^u$, and then

\begin{align}
\label{eq:markov}
\Hmin (Z | UE) = \Hmin (Z|U).
\end{align}

Then, 

\begin{align}
\label{eq:hmineqk}
\Hmin (Z | U) = \Hmax (Z|U) = \Hmin (U | Z) = k,
\end{align}

since $\Hmin (Z | U) = \Hmin (Y | U) = \Hmin (Y) =\Hmin (X) = k$, and similarly for $\Hmax (Z | U)$ and $\Hmin (U | Z)$.

For smooth entropies, we also have that 

\begin{align}
\label{eq:function}
\Hmin^\epsilon (X | EZ) = \Hmin^\epsilon (Y|EZ) = \Hmin^\epsilon (U|EZ),
\end{align}

in which the last equality follows from \cite[Lemma~A.7]{DBWR14}.

We also make use of the following chain rules for smooth entropies (these are special cases of the inequalities proven in~\cite{VDTR13}),
with $f_\epsilon = \log ( \frac{1}{1 - \sqrt{1-\epsilon^2}} )$,

\begin{align}
\label{eq:cr1}
& \Hmin^{2\epsilon} (AB | C)  \geq \Hmin (A|BC) + \Hmin^{\epsilon} (B|C) - f_\epsilon , \\
\label{eq:cr2}
& \Hmax^{\epsilon} (AB | C)  \leq \Hmax (A|BC) + \Hmax (B|C) + f_\epsilon, \\
\label{eq:cr3}
& \Hmin^{2\epsilon} (AB | C)  \leq \Hmin^{7\epsilon} (A|BC) + \Hmax^{2\epsilon} (B|C) + 2 f_\epsilon, \\
\label{eq:cr4}
& \Hmax^{\epsilon} (AB | C)  \geq \Hmin (A|BC) + \Hmax^{2\epsilon} (B|C) - 2 f_\epsilon.
\end{align}

We use the fact that the smooth entropies satisfy a data processing inequality,

\begin{align}
\label{eq:dpi}
\Hmax^\epsilon (Z | E) \leq \Hmax^\epsilon (Z).
\end{align}

Finally, we use the following bound from uncertainty relation/sampling (see \cite{TLGR12}), for some parameters $\mu_\epsilon$ and $\delta_\epsilon$, and for $\gamma$ the tolerable noise rate,

\begin{align}
\label{eq:ucr}
\Hmin^\epsilon (U | E) \geq n (1 - H_2 (\gamma + \mu_\epsilon)) - \delta_\epsilon .
\end{align}

\subsubsection{Deriving the min-entropy bound}

We can now prove the desired bound on $\Hmin^{7 \epsilon} (X |EZ)$,

\begin{align*}
k - \Hmin^{7 \epsilon} (X|EZ) & = \Hmin (U|Z) - \Hmin^{7 \epsilon} (U|EZ)  &\textrm{by}~(\ref{eq:hmineqk})~\textrm{and}~ (\ref{eq:function}) \\
	& \leq \Hmax^{ \epsilon} (UZ) - \Hmax^{2 \epsilon} (Z) + 2 f_\epsilon  &\textrm{by}~(\ref{eq:cr4}) \\
	& \quad - \Hmin^{2 \epsilon} (UZ|E) + \Hmax^{2 \epsilon} (Z|E) + 2f_\epsilon  & \textrm{by}~(\ref{eq:cr3}) \\
	& \leq \Hmax^{ \epsilon} (UZ) - \Hmin^{2 \epsilon} (UZ|E) + 4 f_\epsilon  & \textrm{by}~(\ref{eq:dpi}) \\
	& \leq \Hmax (U) + \Hmax (Z|U) + f_\epsilon  & \textrm{by}~(\ref{eq:cr2}) \\
	& \quad - \Hmin^{ \epsilon} (U|E) - \Hmin (Z|UE) + f_\epsilon + 4 f_\epsilon  & \textrm{by}~(\ref{eq:cr1}) \\
	& = n + \Hmax (Z|U) + 6 f_\epsilon  \\
	& \quad - \Hmin^{ \epsilon} (U|E) - \Hmin (Z|U)   & \textrm{by}~(\ref{eq:markov}) \\
	& = n - \Hmin^{ \epsilon} (U|E) + 6 f_\epsilon  & \textrm{by}~(\ref{eq:hmineqk}) \\
	& \leq n - n (1 - H_2 (\gamma + \mu_\epsilon)) + \delta_\epsilon + 6 f_\epsilon  & \textrm{by}~(\ref{eq:ucr}) \\
	& = n ( H_2 (\gamma + \mu_\epsilon)) + \delta_\epsilon + 6 f_\epsilon.
\end{align*}

Rearranging terms, we get

\begin{align}
\Hmin^{ 7 \epsilon} (X|EZ) \geq k - n ( H_2 (\gamma + \mu_\epsilon)) - \delta_\epsilon - 6 f_\epsilon.
\end{align}

\end{document}